\definecolor{lightred}{HTML}{EEA599}
\definecolor{lightyellow}{HTML}{FFE9BE}
\definecolor{lightgreen}{HTML}{E3EDE0}
\newtheorem{theorem}{Theorem}[section]
\begin{document}

\title{Empowering IoT Firmware Secure Update with Customization Rights}

\author{
Weihao Chen, Yansong Gao, Boyu Kuang, Jin~B.~Hong, Yuqing Zhang, and Anmin Fu%
\thanks{Weihao Chen and Boyu Kuang are with the School of Cyber Science and Engineering, Nanjing University of Science and Technology, Nanjing, China (e-mail: chenweihao0316@njust.edu.cn; kuang@njust.edu.cn).}%
\thanks{Yansong Gao and Jin B. Hong are with the School of Computer Science and Software Engineering, The University of Western Australia, Perth, Australia (e-mail: gao.yansong@hotmail.com; jin.hong@uwa.edu.au).}%
\thanks{Yuqing Zhang is with the School of Cyber Security, University of Chinese Academy of Sciences, Beijing, China (e-mail: zhangyq@ucas.ac.cn).}%
\thanks{Anmin Fu is with the School of Cyber Science and Engineering, Nanjing University of Science and Technology, Nanjing, China (e-mail: fam\_0522@163.com).}%
}

\maketitle

\begin{abstract}
Firmware updates remain the primary line of defence for IoT devices. However, the update channel itself has become a well-established attack vector. Existing defences predominantly focus on securing monolithic firmware images, leaving module-level customisation—a rising demand among users—largely unprotected and insufficiently explored.

To bridge this gap, we conduct a pilot study on the update workflows of 200 Linux-based IoT devices across 23 vendors. Our analysis uncovers five previously undocumented vulnerabilities stemming from customisation practices. A broader investigation of update-related CVEs from 2020 to 2024 further reveals that over half originate from customisation-induced issues. These findings highlight a critical yet underexamined reality: as customisation increases, so does the attack surface—yet current defences fail to keep pace.
We thus present \textbf{IMUP} (\textit{Integrity-Centric Modular Update Platform}), the first framework to jointly address the two key challenges identified in our pilot study: (C1) constructing a trustworthy cross-module integrity chain, and (C2) scaling update performance under mass customisation. 
IMUP integrates three core techniques: (i) \textit{per-module chameleon hashing} to enable verifiable integrity across customised firmware compositions, (ii) \textit{server-side proof-of-work offloading} to preserve the lightweight nature of IoT devices, and (iii) \textit{server-side caching} to reuse previously generated module combinations, significantly reducing the need to rebuild customised images from scratch.

Security analysis demonstrates that even under extreme conditions—where up to 95\% of secret keys are exposed—forging a valid image still incurs a computational cost over $300\times$ greater than that of the legitimate server.
Extensive experiments on heterogeneous IoT devices and a server cluster validate IMUP’s efficiency: for a single wave of 30{,}000 customised update requests, server-side generation time is reduced by $2.9\times$, and device downtime is reduced by $5.9\times$, compared to the package-manager baseline.
All code, datasets, and reproduction scripts are available at
\href{https://github.com/Integrity-Centric}{https://github.com/Integrity-Centric}.
\end{abstract}

\begin{IEEEkeywords}
IoT security, Update, User Customization
\end{IEEEkeywords}

\section{Introduction}

\IEEEPARstart{F}{irmware} updates for Internet‑of‑Things (IoT) devices enhance security through vulnerability patching and improve user experience by introducing new features~\cite{Mavromatis2022ReliableIF}. However, modern IoT devices increasingly encompass a wide range of features, and each user or device may require only a subset of features. For instance, a smart camera might need motion detection without cloud-based recognition, whereas a router could benefit from VPN support without ad-blocking. This requires the customization of IoT firmware updates to selectively update only the necessary functionalities or features. However, we note that there does not exist a widely acknowledged definition of the IoT firmware update customization.
Drawing on Mahurkar \textit{et al.}\text{’s} concept of software customization~\cite{10.1145/3590140.3629121}, we define Customization Rights as the capability to selectively update only the necessary IoT firmware modules—rather than applying a monolithic image—within an authorized and integrity-preserving process. This notion emphasizes controlled flexibility: users can tailor updates to specific functional requirements, while the system maintains a trustworthy update boundary.
Such customized firmware updates present a unique challenge, as the system must enforce per-module authentication and maintain cross-module integrity (e.g., if an update image is injected with an untrusted functional module or misses a required dependency module, the entire update should be rendered invalid)—a significantly more complex task than verifying a single monolithic firmware.

\begin{figure}[htbp]
  \centering
  \includegraphics[width=\columnwidth]{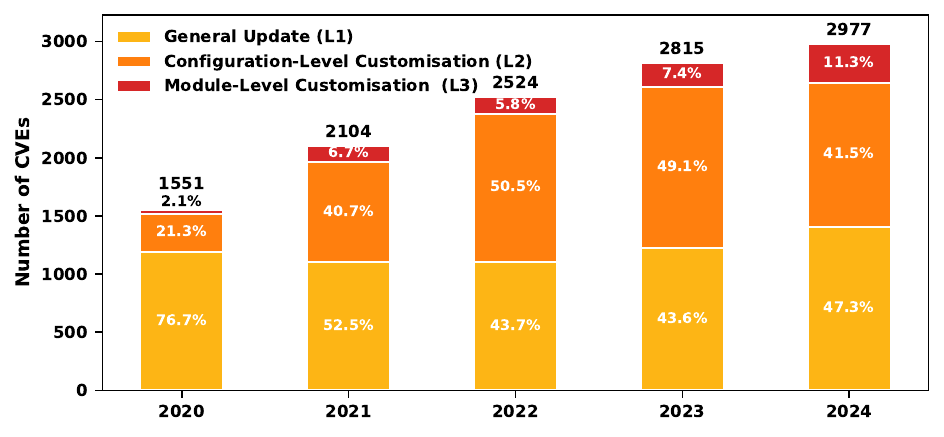}
  \caption{Firmware update vulnerabilities by customisation (2020-2024)}
  \label{fig:firmware-cve}
\end{figure}

\noindent{\bf From Monolithic Update to Module Customization.} Although firmware updates are widely considered a key method for mitigating vulnerabilities, real-world cases repeatedly expose the high risks within the update chain itself. For example, ASUS ShadowHammer abused a valid certificate to distribute tampered firmware~\cite{shadowhammer_asus_2019}; the CrowdStrike–Microsoft incident caused global outages due to improper update handling~\cite{tipranks2024}; and Wu \textit{et al.} (USENIX Security 2024) systematically revealed multiple scenarios in which signed patches could still be manipulated~\cite{wu2024your}. The OWASP IoT Top 10 also lists “insecure update mechanism” as one of the top five threats to embedded devices~\cite{owasp_iot_top_10_2018}. However, most of these works assume that firmware is released as a monolithic package and verified in a single step, ignoring the increasingly common practice of module-level customisation: manufacturers or users may replace individual function modules, inject scripts, or overwrite configurations. These actions fragment the verification boundary and expand the attack surface, yet their scale and evolution have not been systematically quantified. Based on this gap, we raise two early research questions:

\begin{itemize}
  \item \textbf{RQ1:} To what extent does module-level firmware customisation expand the attack surface of IoT devices?
\end{itemize}

\vspace{0.10cm}

\noindent{\bf Pilot Study Overview.} To address RQ1, we conducted an empirical investigation involving over 200 devices across 23 vendors, complemented by reverse engineering to construct a unified threat model (Section~\ref{ThreatModel}).  
We systematically reviewed each vendor’s customisation support strategy and identified three core attack surface touchpoints introduced by modular update mechanisms.  
Through real-device testing, we discovered five previously unknown 0-day vulnerabilities (CVE-1 to CVE-5), which informed our characterization of Layer-2 (L2) and Layer-3 (L3) customisation-induced vulnerabilities.  
Leveraging these characteristics as labels, we performed an automated classification of all update-related CVEs disclosed from 2020 to 2024. Our analysis reveals that fraction of module-level (L3) vulnerabilities have surged nearly $5\times$ over five years (Figure~\ref{fig:firmware-cve}).  
These findings collectively demonstrate that: (i) customisation significantly enlarges the firmware update attack surface, and (ii) its security implications are rapidly escalating yet remain insufficiently addressed.

\noindent{\bf Lack of Existing Solution.} 
While some vendors advertise “customization features,” their incomplete solutions often shift the burden—and associated risks—of customisation onto end users~\cite{openwrtAbout}, thereby expanding the attack surface while neglecting legitimate demands for flexibility. 
Our pilot study affirms that over 50\% of update-related CVEs reported between 2020 and 2024 originate from these customisation interfaces, underscoring the urgency of the problem.  
Although hardware roots of trust and digital signatures are effective at securing monolithic firmware images~\cite{ebbers2022large}, they offer no support for partial, module-level updates. 

\noindent\textbf{Device-Side Baselines.} Existing firmware update mechanisms predominantly follow one of two baselines:  
\textbf{Monolithic Rebuild.} Any modification from end-devices triggers a full image rebuild and re-signing, incurring high build, bandwidth, and device downtime costs.  
\textbf{Package Manager (e.g., Opkg).} Allows incremental installation but lacks robust isolation and a global integrity chain, exposing the system to threats such as dependency confusion and rollback attacks.  
Neither approach adequately balances strong security guarantees with the flexibility required for scalable customisation.

{\noindent\textbf{Server-Side Optimisations.} Extensive work on firmware distribution has explored incremental updates, differential patching, and repository-based installs to cut bandwidth~\cite{10.1145/3384419.3430471,liu2023light}.  
However, these optimisations still assume a single, vendor-defined image.  
Once end users are free to mix and match modules, the search space of valid images grows exponentially.  
A single critical patch can therefore fan out into thousands of distinct requests, multiplying certificate checks and I/O on the server.

Together, the lack of cross-module integrity guarantees and diverse customization requests render two challenges:

\begin{itemize}[leftmargin=*]
  \item \textbf{C1 (Security):} How can we construct a trustworthy cross-module integrity chain without negating user customization?
  \item \textbf{C2 (Scalability):} How can resource-constrained IoT devices and large-scale servers sustain the computational and storage overhead induced by massive user-driven customization?
\end{itemize}

\vspace{0.10cm}
\noindent\textbf{Our Solution.}
We propose \textbf{IMUP}, a firmware update mechanism that reconciles security, efficiency, and customisation—three goals that existing solutions struggle to achieve simultaneously.  
Unlike prior approaches, IMUP embeds user-defined customisation directly into its trusted computing base.  
To enforce per-module integrity (\textbf{C1}), IMUP employs a chameleon hash with weak collision resistance, enabling verifiable integrity while supporting customisable components.  
Heavyweight cryptographic operations—such as collision generation and validation—are offloaded to a resource-rich server via Proof-of-Work offloading, allowing constrained devices to perform only lightweight tag verification.  
To address performance bottlenecks (\textbf{C2}), IMUP features a modular update pipeline where clients retrieve only the required components. On the server side, previously built module combinations are cached and reused across requests, significantly reducing rebuild and signing costs. Our contributions are threefold.

\begin{itemize}[leftmargin=*]\itemsep0pt
\item We conduct the first large-scale analysis of customisation practices across 200 devices from 23 vendors, uncovering five unknown CVEs. Our findings expose systemic risks induced by customisation-enabled firmware updates.
\item We present the first framework that integrates module-level chameleon hashing with proof-of-work offloading and server-side cache reuse to jointly ensure update integrity and scalable performance.
\item Evaluations on real-world devices and a server show that IMUP reduces firmware generation latency by $2.92\times$ and device downtime by $5.93\times$ compared to a package-manager baseline. Even under $95\%$ key exposure, forging a valid image remains $300\times$ expensive for an attacker than legitimate server-side updates—ensuring resilient disaster recovery.
\end{itemize}

\section{Related Work}

\subsection{Security-Oriented Approaches}
A substantial body of research focuses on ensuring the integrity and authenticity of firmware during updates. These approaches commonly employ cryptographic primitives for signing and verification, or leverage trusted execution environments to enforce authenticity. For example, Xie \textit{et al.}~\cite{10422815} propose a lightweight authenticated key agreement scheme suitable for resource-constrained IoT devices, while Yu \textit{et al.}~\cite{10583906} develop EDASVIC for dynamic integrity checks in industrial cloud platforms. Blockchain-based solutions have also been explored, as in Hu \textit{et al.}~\cite{hu2019autonomous}, which employs immutable ledgers and batch verification to secure over-the-air updates.}

While these mechanisms provide strong protection against tampering and unauthorized updates, they uniformly assume a monolithic firmware image, where integrity is enforced globally via a single signature or hash. Such designs do not consider modular update workflows and therefore fail to address cross-module dependency risks introduced by customization.

\subsection{Efficiency-Focused Solutions}

To mitigate bandwidth and energy overheads during large-scale deployments, researchers have investigated mechanisms that optimize data transfer and reduce computational costs. Incremental and differential update techniques, such as those proposed by Arakadakis \textit{et al.}~\cite{10.1145/3384419.3430471}, reduce update size by transmitting only changed segments. Liu \textit{et al.}~\cite{liu2023light} improve energy efficiency by introducing lightweight flash write operations for energy-harvesting devices. Network-level optimizations, like the CoAP-based two-phase dissemination in Park \textit{et al.}~\cite{park2019two}, aim to accelerate distribution in resource-constrained environments. Blockchain-based methods~\cite{hu2019autonomous,10153598} have further integrated batch verification and distributed trust to reduce centralized bottlenecks.

Although these techniques significantly improve scalability and responsiveness, they largely ignore integrity challenges under modular customization. Their efficiency models presuppose that the update is a single, fixed image, leaving the verification of interdependent modules unexplored.

\subsection{Customization-Oriented Frameworks}

Recent efforts have begun to explore flexible update frameworks tailored for IoT devices. UpKit~\cite{8884933}, for instance, provides an open-source, portable mechanism for firmware updates, enabling lightweight integration of modular components. Similarly, SecuCode~\cite{su2019secucode} addresses secure wireless dissemination for highly constrained devices, allowing selective code deployment in dynamic environments. These works demonstrate the feasibility and utility of customized updates, where functionality can be adapted without a full image rewrite.

However, existing frameworks prioritize flexibility over integrity. They generally lack mechanisms to enforce chained verification across modules, making them susceptible to rollback attacks, dependency manipulation, and malicious module injection. As such, they inadvertently introduce new attack vectors by decoupling update logic from global integrity assurance.


\subsection{Existing Limitations and Challenges}

While prior research has addressed specific aspects of firmware update security, efficiency, and flexibility, none provides a unified solution that balances these dimensions under large-scale customization. Our analysis reveals three critical gaps:

\begin{enumerate}[label=(\arabic*)]
    \item \textbf{Assumption of Monolithic Firmware:} Security mechanisms~\cite{10422815,10583906,hu2019autonomous} rely on global signatures for entire images, which fail when firmware is decomposed into independently updatable modules.
    
    \item \textbf{Lack of Cross-Module Integrity:} Efficiency-driven approaches~\cite{10.1145/3384419.3430471,liu2023light,park2019two,hu2019autonomous,10153598} optimize resource consumption but ignore integrity dependencies introduced by modular customization, making it possible for attackers to reorder or inject modules.
    
    \item \textbf{Customization Without Safety Guarantees:} Frameworks such as UpKit~\cite{8884933} and SecuCode~\cite{su2019secucode} prioritize functional flexibility but omit chained verification across modules, exposing vulnerabilities like rollback attacks, dependency manipulation, and malicious injection.
\end{enumerate}

These limitations underscore the need for an integrity-centric solution that supports granular customization without sacrificing security or scalability—motivating the design of IMUP presented in Section~\ref{Design}.

\subsection{Comparison of Related Works}

To position IMUP within the context of existing research, we compare prior works across six evaluation dimensions that collectively capture update security, performance, and flexibility. These include:

\begin{itemize}
    \item \textbf{Update Security:} Whether the scheme enforces cryptographic integrity and authenticity checks on firmware images.
    \item \textbf{Update Efficiency:} Indicates mechanisms to reduce update size or verification overhead, such as incremental transfer or batch verification.
    \item \textbf{Wireless Update:} Denotes support for over-the-air (OTA) update delivery.
    \item \textbf{Server Cost:} Reflects computational overhead on the update server; a low-cost design minimizes dependency on centralized verification or heavy cryptographic operations.
    \item \textbf{Config-Level Customization (L2):} Ability to modify configuration parameters (e.g., network scripts) without rebuilding the entire image.
    \item \textbf{Module-Level Customization (L3):} Ability to add or remove functional modules during an update while preserving integrity guarantees.
\end{itemize}

\begin{table}[htbp]
\belowrulesep=0pt
\aboverulesep=0pt
\centering
\caption{Comparison Between Related Works}
\resizebox{\linewidth}{!}{
\begin{tabular}{
>{\arraybackslash}c
>{\arraybackslash}c
>{\arraybackslash}c
>{\arraybackslash}c
>{\arraybackslash}c
>{\arraybackslash}c
>{\arraybackslash}c
}
\toprule
Scheme & \makecell{Update \\ Security} & \makecell{Update \\ Efficiency} & \makecell{Wireless \\ Update} & \makecell{Server \\ Cost} & \makecell{Config-Level \\ (L2)} & \makecell{Module-Level \\ (L3)} \\
\midrule
IncUpd\cite{10.1145/3384419.3430471} & \ding{56} & \ding{52} & \ding{56} & \ding{56} & \ding{56} & \ding{56} \\
LFW\cite{liu2023light} & \ding{56} & \ding{52} & \ding{56} & \ding{56} & \ding{56} & \ding{56} \\
TP-FOTA\cite{park2019two} & \ding{56} & \ding{52} & \ding{52} & \ding{52} & \ding{56} & \ding{56} \\
BatchAuth\cite{10704696} & \ding{52} & \ding{52} & \ding{56} & \ding{56} & \ding{56} & \ding{56} \\
UpKit\cite{8884933} & \ding{56} & \ding{52} & \ding{52} & \ding{52} & \ding{52} & \ding{56} \\
SecuCode\cite{su2019secucode} & \ding{52} & \ding{52} & \ding{52} & \ding{56} & \ding{56} & \ding{52} \\
\textbf{IMUP (Ours)} & \ding{52} & \ding{52} & \ding{52} & \ding{52} & \ding{52} & \ding{52} \\
\bottomrule
\end{tabular}
}
\label{table:FirmwareUpdateComparison}
\end{table}

Table~\ref{table:FirmwareUpdateComparison} summarizes how representative schemes align with these criteria. Notably, IMUP is the only approach that satisfies both L2 and L3 customization while maintaining cross-module integrity and efficiency.

\section{Pilot Study} \label{PilotStudy}

Existing studies have yet to clearly demonstrate the security threats introduced by customisation in the IoT firmware update process. To bridge this gap, we begin by systematically reviewing and comparing the customisation practices of major firmware platforms, uncovering several previously undocumented security issues. Building on the shared characteristics of these issues, we further label and classify all public update-related vulnerabilities reported from 2020 to 2024. This analysis enables us to quantify the evolution of customisation-induced risks over the past five years.



\subsection{Customization Overview}

 Device customization has become a key factor in ensuring the efficient operation of IoT devices across diverse and dynamic settings \cite{9355854}. Mehdipour \textit{et al.} \cite{9355854} highlight that the diversity of device functionalities, dynamic network topologies, and evolving security threats demand customization requirements that standard solutions cannot adequately address. Additionally, Sousa \textit{et al.} \cite{298092,298238} emphasize that customization allows IoT devices to achieve higher security and more efficient communication tailored to specific operational environments. User customization has also emerged as a critical factor influencing the choice of IoT devices \cite{casaca2024}. Personalization is vital for improving user satisfaction and experience \cite{7881335,9326414}, as it enables devices to adapt to individual preferences and requirements.

\begin{table}[htbp]
\centering
\scriptsize  
\caption{Summary of Partial Firmware Manufacturers' Update Strategies (full summary available on GitHub)}
\begin{tabular}{|>{\centering\arraybackslash}m{0.15\linewidth}
                |>{\centering\arraybackslash}m{0.13\linewidth}
                |>{\centering\arraybackslash}m{0.18\linewidth}
                |>{\centering\arraybackslash}m{0.15\linewidth}
                |>{\centering\arraybackslash}m{0.13\linewidth}|}
\hline
Classification & IoT Vendor & Key Protections & Update Frequency & Package Manager \\
\hline
\cellcolor{lightgreen} \textbf{Open Source} & OpenWrt & \cellcolor{white}Rollback: \textcolor{red}{No} \newline Auto: \textcolor{red}{No} & \cellcolor{lightgreen}Monthly & \cellcolor{lightgreen}opkg \\
\hline
\cellcolor{lightred} \textbf{Commercial} & Cisco & \cellcolor{white}Rollback: \textcolor{green!50!black}{Yes} \newline Auto: \textcolor{green!50!black}{Yes} & \cellcolor{lightgreen}Monthly & \cellcolor{lightred}No \\
\cline{2-5}
\cellcolor{lightred} & TP-Link & \cellcolor{white}Rollback: \textcolor{green!50!black}{Yes} \newline Auto: \textcolor{green!50!black}{Yes} & \cellcolor{lightyellow}3--6 months & \cellcolor{lightred}No \\
\hline
\end{tabular}
\label{tab:firmware-update-strategies-partial}
\end{table}

To assess whether current firmware updates adequately address customization needs, we conducted an empirical investigation into manufacturers' customization practices and analyzed the update management strategies of leading firmware providers. In the IoT ecosystem, firmware providers are primarily categorized into open-source and commercial vendors. Open-source providers, such as OpenWrt and Arduino, offer firmware that the community can freely use and modify. In contrast, commercial providers supply proprietary firmware developed and maintained by device manufacturers or specialized vendors.

To systematically assess these manufacturers' firmware update practices, we have reviewed official documentation, product manuals, technical support websites, and publicly available technical reports. We establish four key indicators for evaluation: prevention of firmware rollback, support for automatic updates, update frequency, and the package manager utilized. These indicators effectively capture the diverse strategies and measures employed by manufacturers during the firmware upgrade process. 
For this study, we collected information on 200 IoT devices\footnote{Detailed at \href{https://github.com/Integrity-Centric/IMUP_Firmware_DataSet}{https://github.com/Integrity-Centric/IMUP\_Firmware\_DataSet}}, and an analysis of firmware update strategies among mainstream manufacturers reveals stark differences between open-source and commercial firmware providers. The key findings on representative vendors are summarized in Table \ref{tab:firmware-update-strategies-partial}, while the complete table of all analyzed vendors is available in \href{https://github.com/Integrity-Centric/IMUP_Firmware_DataSet}{GitHub}.

\begin{enumerate}
    \item Open-source projects such as OpenWrt emphasize flexibility and extensive user customization, allowing users to modify firmware functionalities to meet specific needs. However, these projects typically do not support automatic updates or firmware rollback protection, resulting in vulnerabilities within their security measures and insufficient security assurances. 
    \item In contrast, commercial firmware providers like Cisco, TP-Link, and Huawei prioritize enhancing device management efficiency through automation and robust security mechanisms. These vendors generally support automatic updates and firmware rollback protection, implement stringent version management, and maintain frequent update cycles to bolster device security.
\end{enumerate}

In summary, the analysis of firmware update strategies shows that while open-source firmware projects prioritize user customization and flexibility, they often lack robust security features like automatic updates and rollback protection, which limits their widespread adoption. On the other hand, commercial firmware providers offer more automated, secure update mechanisms. Nevertheless, they sacrifice customization to obtain stronger automation, highlighting that widespread adoption of extensive firmware customization remains limited. 

\subsection{Unified Threat Model for Updates} \label{ThreatModel}

\begin{figure}[htbp]
    \centering
    \includegraphics[width=0.48\textwidth]{./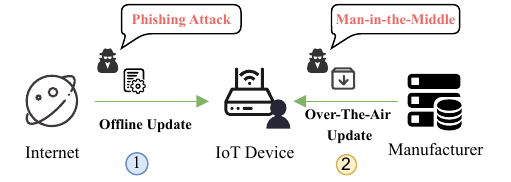}
    \caption{Offline vs.\ OTA update paths and two customisation-specific risk points:\textcircled{\scriptsize 1} L2 (config-level tampering); \textcircled{\scriptsize 2} L3 (module-level forging).}
    \label{Threat Models}
\end{figure}

\noindent\textbf{Update modes.}
• \emph{Offline} (risk \textcircled{\scriptsize 1}): users sideload configuration files or scripts that bypass signing.
• \emph{OTA} (risk \textcircled{\scriptsize 2}): vendor-signed image accepts extra modules whose cross-module
integrity is unchecked.

\noindent\textbf{Adversary.}
We assume remote attackers capable of MITM and binary tampering; physical attacks are out of scope (details in Appx ~\ref{Appendix:Threat-Model}).

\subsection{Identified Categories of Customization Vulnerabilities} \label{Vulnerabilities}

During our testing of the firmware update chains from 23 vendors and over 200 devices, we discovered five previously unknown 0-day vulnerabilities (referred to as CVE-1 to CVE-5\footnote{For {\it anonymity purposes}, we do not use the real CVE number here.})).
For completeness, we keep a baseline class L1—vulnerabilities that reside entirely inside the update process itself (e.g., a stored XSS in the web-based updater that never touches the image) and do not alter the core image boundary. Based on how customisation operations cross the image boundary, the five 0-days fall into two customization-dependent classes: L2 (configuration-level) and L3 (module-level).

\noindent \textbf{CVE-1 — L2: Configuration-Level Customisation Vulnerability.} A popular IP camera offers a feature for custom scripts. However, in its firmware update package, the script file is treated only as data and is not verified as strictly as the core image. An attacker can insert malicious code into the script filename. If a user downloads and applies this unverified file during an offline update, the device may be compromised.  
The common feature of L2 vulnerabilities is that \textbf{the main firmware image remains unchanged}, and the risk comes from weak verification of auxiliary files such as scripts, or rollback markers.

\noindent \textbf{CVE-5 — L3: Module-Level Customisation Vulnerability}
A certain vendor's router is designed to support firmware customization. However, its update process is vulnerable to a stack overflow attack. Although the official firmware package does not trigger this issue, an attacker can exploit the customization feature by modifying the length field in the firmware header. This allows a malicious firmware to bypass the integrity check and take control of the device. This case highlights the security risks caused by weak verification in customization-friendly systems.  
L3 vulnerabilities are characterized by \textbf{adding or replacing binary components}, which expands the attack surface to include the module loader, dependency resolution, and cross-module integrity chains.

\subsection{Quantify Customisation CVEs in Firmware Updates}

\noindent\textbf{Data Source and Method.} We used the \textbf{NVD v2 API}~\cite{nvd_api} to extract all CVE records related to ``update'' from 2020 to 2024. By using multiple queries with keywords such as ``\verb|firmware update|'', and applying a 120-day rolling window to avoid API rate limits, we obtained 14,389 candidate CVEs. Each CVE description was processed using the \textbf{L1/L2/L3 regular expression classifiers} proposed in the previous section.

Figure\,\ref{fig:firmware-cve} shows the overall distribution of the three vulnerability types from 2020 to 2024. Starting from 2022, \textbf{L2} vulnerabilities account for 50.5\%, \textbf{L3} for 5.8\%, and the remaining 43.7\% are L1, which only involve the update process itself. In other words, since 2022, \emph{more than half} of all update-related vulnerabilities are strongly related to customisation, highlighting its impact on the integrity chain. L3 vulnerabilities have grown the fastest: the number of module-level issues increased from only 32 in 2020 to 335 in 2024—an increase of nearly 857\% over five years.

\subsection{Key Finding}

Customization is a critical factor in enhancing the functionality, adaptability, and user satisfaction of IoT devices, particularly in diverse and dynamic environments, such as security cameras with privacy modes or sensors operating in fluctuating network conditions. However, our pilot investigation reveals that the current level of customization is insufficient, with limited support from manufacturers and a lack of standardized practices. This gap not only restricts user flexibility but also introduces significant security challenges, as customization expands the attack surface and facilitates adversary exploitation.

\textbf{Further quantitative analysis shows that the risks from module-level customisation (L3) are rising rapidly and have long been overlooked.} Our statistics on all firmware update-related \textsc{CVE}s from 2020 to 2024 show that vulnerabilities involving the \emph{addition or replacement of binary components} in firmware have increased nearly $5\times$ over five years.  
At the same time, existing industry solutions still follow the traditional “one-time signing of full images” model. They fail to build module-level integrity chains for third-party plugins, optional feature packages, or Stock Keeping Unit variants, resulting in growing blind spots in verification.

Based on the key findings of our pilot study, the rest of this work provides a unique perspective and solution on the interplay between customization and security, aiming to offer more robust and adaptable firmware update strategies.

\section{Preliminary}

\subsection{Chameleon Hash}
The Chameleon Hash Function (C\textsubscript{hash}) is a cryptographic primitive that supports trapdoor-enabled collision generation, originally introduced by Krawczyk \textit{et al.}~\cite{krawczyk1998chameleon}. In contrast to traditional hash functions, which are strictly collision-resistant, Chameleon Hash allows an authorized party with a trapdoor key to generate collisions in a controlled manner.

In this work, we adopt the identity-based Chameleon Hash construction proposed by Chen \textit{et al.}~\cite{chen2004chameleon}, which enhances key management by preventing trapdoor key exposure. This construction plays a critical role in our update mechanism: it enables legitimate firmware providers to modify update content—such as customized data—without changing the overall digest, thereby achieving efficient customization with integrity guarantees.
The Chameleon Hash is defined as:
\[
\textsf{C}_{\text{hash}}(m, r) = \textsf{C}_{\text{hash}}(m', r'),
\]
where \( m \neq m' \), and the collision-resistance property is selectively bypassed using a trapdoor function.

The value \( r' \) required to produce a collision with a new message \( m' \) is computed by the collision-finding algorithm:
\[
r' = \textsf{C}_{\rm sk}(m, m', r),
\]
where \( sk \) is the secret trapdoor key.

In our scheme, each pair \( \{m, r\} \) and \( \{m', r'\} \) forms a valid collision under the same digest. This property allows controlled content rewriting in firmware updates while preventing unauthorized manipulation, as only parties holding the trapdoor can generate such collisions.

\begin{figure*}[htbp]
    \centering
    \includegraphics[width=1\textwidth]{./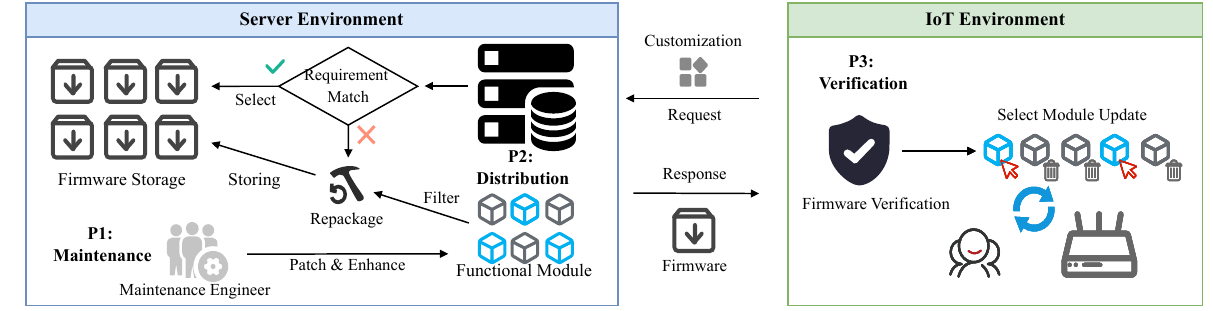}
    \caption{IMUP overview}
    \label{fig:Overview}
\end{figure*}

\subsection{Proof of Work (PoW)}

PoW is a cryptographic mechanism originally proposed by Dwork and Naor~\cite{dwork1992pricing} to introduce computational costs that mitigate abuse and prevent Denial-of-Service (DoS) attacks. IMUP leverages PoW to create computational asymmetry: resource-rich servers can efficiently solve PoW challenges, but adversaries attempting to forge firmware must incur significant computational expenses, thereby increasing their attack cost.

We formally define the PoW generation process as:
\[
\textsf{PoW}(T, m, D_{\text{PoW}}) \rightarrow (\textsf{SolutionNonce}, B_{\text{hash}}),
\]
where: \( T \) denotes the task structure (in IMUP, this involves repeated Chameleon Hash computations), \( m \) is the input message (e.g., customized metadata digest), \( D_{\text{PoW}} \) is the difficulty threshold, \( \textsf{SolutionNonce} \) is the computed nonce satisfying the difficulty, \( B_{\text{hash}} \) is the resulting hash value that meets the required threshold.

In our design, the task \( T \) involves repeatedly computing the Chameleon Hash over \( m \) with different nonce candidates until the resulting hash satisfies the difficulty constraint \( D_{\text{PoW}} \).

The corresponding PoW verification process is defined as:
\[
\textsf{Verify}(\textsf{SolutionNonce}, m, D_{\text{PoW}}) \rightarrow \text{Bool},
\]
where \( \text{Bool} \in \{0, 1\} \). The verifier recomputes the hash from \( (m, \textsf{SolutionNonce}) \) and checks whether it satisfies the specified difficulty. A result of 1 indicates successful verification; otherwise, failed.

This mechanism enhances the integrity of firmware updates by ensuring that only parties capable of performing sufficient computational work can generate valid proofs, while minimizing the burden on resource-constrained IoT devices.

\subsection{Notations}

The notations used in the rest of the paper, following that in \cite{chen2004chameleon}, are described below. In general, bold characters are used to denote vectors; for instance, a Chameleon Hash digest vector is represented as \(\mathbf{CDigest}\). Calligraphic font is used to denote sets; for example, a set of such digest vectors is represented as $\mathcal CDigest$, where \(\mathbf{CDigest}\) \(\in\) $\mathcal CDigest$.

\begin{itemize}[leftmargin=*]

    \item The Crypto Module (\(\mathbf{CModule}\)) is a fundamental cryptographic component of IMUP, and its generation algorithm is detailed in Section \ref{5.2.1}.

    \item The Functional Module (\(\mathbf{FModule}\)) is the core component that facilitates the customization of the solution. It is packaged as an extension for loading functional expansions and patch updates by the manufacturer, and during the update phase, it is selectively utilized by the user.

    \item The verification chain, denoted as \textsf{Verify}\((\mathbf{H}, C)\), consists of two distinct components.
    The first component, \(\mathbf{H} = (h_1, h_2, \ldots, h_n)\), represents an ordered sequence of Chameleon Hash information, where each \(h_i\) corresponds to a specific Chameleon Hash output, and the order of these elements is strictly preserved. The second component, \(C\), is a unique cryptographic commitment that serves as the anchor of the chain, ensuring the integrity and consistency of the associated Chameleon Hash information. Together, \(\mathbf{H}\) and \(C\) form the verification chain, encapsulating the essential elements required for the verification process.
    
\end{itemize}

\section{IMUP Design} \label{Design}

Aligned with the commonly used two-party update model\cite{rfc9019,9727516}, IMUP models a server and a fleet of IoT devices.  
Figure~\ref{fig:Overview} illustrates its three working phases: P1~\emph{maintenance}, P2~\emph{distribution}, and P3~\emph{verification}.

\noindent$\bullet$\,\textit{P1 Maintenance:} Server-side engineers patch vulnerabilities or add features, then invoke {\small\textsf{IMUP.Package}} to wrap the modified code into a signed functional module \(\mathbf{FModule}\).

\noindent$\bullet$\,\textit{P2 Distribution:} Upon a customisation request, the server searches its cache for a matching firmware; otherwise it assembles the requested module set on demand and signs the result.

\noindent$\bullet$\,\textit{P3 Verification:} Each device—during either offline or OTA update—verifies the received image with \textsf{Verify}\((\mathbf{H},C)\). Only images that pass are installed; failures are discarded immediately.

The remote-only adversary assumed in §3 applies to all three phases; physical attacks remain out of scope.

\subsection{Maintenance Phase}

During maintenance, engineers patch vulnerabilities or add features and then invoke {\small\textsf{IMUP.Package}} to build a self-contained functional module \(\mathbf{FModule}\). Each module bundles the patched code plus a manifest that describes installation steps and dependencies. {\small\textsf{IMUP.Package}} handles scripts and binary objects separately to speed up dependency resolution.

\subsection{Distribution Phase}\label{5.2}

As shown in Fig.~\ref{fig:UpdatePhase}, this phase runs four server-side stages:
S1~\emph{key-module build}, S2~\emph{initialisation}, S3~\emph{security checkpoint}, and S4~\emph{version iteration}.

\noindent$\bullet$\,\textbf{S1.} IMUP packages each task into a code module \(\mathbf{CModule}\).  
A firmware image is an ordered tuple of modules; only an identical order passes verification.  
Because the same module may appear in many sequences, the builder reuses \(\mathbf{CModule}\)s across versions.

\noindent$\bullet$\,\textbf{S2.} The first signed image loaded onto every device yields a trusted pair \textsf{Verify}\((\mathbf{H},C)\), stored locally for later checks.

\noindent$\bullet$\,\textbf{S3--S4.} An S3 build commits to a fresh \(C\) while keeping the hash chain \(\mathbf{H}\) of the previous checkpoint; S4 produces intermediate customised variants that share this verified baseline.  
Any two images between the same S3 checkpoints are therefore security-equivalent and can be swapped at will.

\begin{figure}[htbp]
    \centering
    \includegraphics[width=0.5\textwidth]{./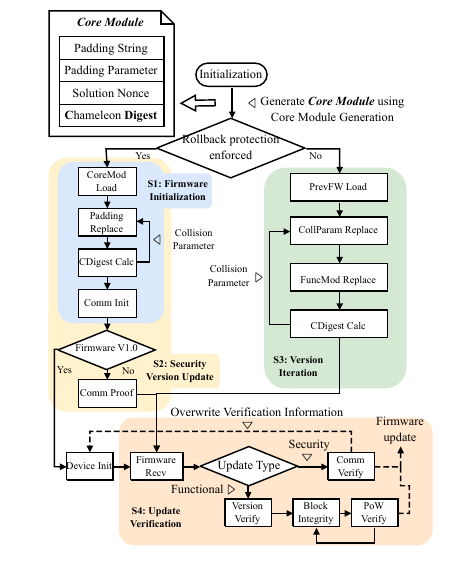}
    \caption{(S1) Key Material Construction, (S2) Firmware Initialization Phase, (S3) Security Version Update Phase, and (S4) Version Iteration Phase collectively belong to the Distribution Phase, while (S5) Devices Verification is part of the Verification Phase.}
    \label{fig:UpdatePhase}
\end{figure}

\subsubsection{Key-Material Construction (S1)}\label{5.2.1}

Stage S1 builds \(N\) code-module templates (\(\mathbf{CModule}\)) and records their digests \(\mathcal{C}\) for later reuse (Alg.~\ref{alg:keygen}).\textcircled{\scriptsize 1} Each \(\mathbf{CModule}\) contains:

\begin{compactitem}
  \item \(PStr\) and \(PParam\): random padding;
  \item a PoW nonce meeting difficulty \(D_{\text{PoW}}\);
  \item a chameleon digest for integrity.
\end{compactitem}

The builder repeats this procedure \(N\) times to obtain \(\mathcal{C}\mathit{Module}\), which feeds the next stage.

\subsubsection{Firmware-Initialisation (S2)}

At S2 the server assembles the first trusted image for each device.

\noindent$\bullet$\textcircled{\scriptsize 2}Select \(L+1\) blocks from \(\mathcal{C}\mathit{Module}\); \(L\) become functional modules and one becomes the commitment block.  

\noindent$\bullet$\textcircled{\scriptsize 3}Replace each \(PStr\) with the corresponding \textit{FModule}; update each \(PParam\) to a fresh collision parameter \(r\) so that every chameleon digest remains valid.  

\noindent$\bullet$\textcircled{\scriptsize 4}After the final block, aggregate the \(L\) digests into \(Block_{\text{Info}}\) and compute the commitment.  

The resulting verification chain \(\mathcal{V}=(\mathbf{H},C)\) is stored on the device and acts as the immutable root for all subsequent updates.

\subsubsection{Security Version Update stage} This stage (summarized in S3 in Figure~\ref{fig:UpdatePhase}) replicates the first four steps of the Firmware Initialization stage and provides the generated firmware with a new verification chain. Beyond the replicated steps, this stage requires the generation of proof of commitment, which is used to verify the prior information commitment. The core concept is that the firmware will enable the device to prove its ability to find a set of hash collisions. These collisions must satisfy the commitment conditions for any given information. This can only be achieved by an entity possessing the trapdoor of the Chameleon Hash function. We only describe the additional step as below.

In \textcircled{\scriptsize 5}, firstly, the information from the previous version will be loaded, which includes the functional module digest \( Block_{\text{Info}} \) and the commitment. Then, the new version block information \( m \) and the proof \( P \) will be calculated, such that 
\[
\textsf{C}_{\text{hash}}(Block_{\text{Info}}, commitment) = \textsf{C}_{\text{hash}}(m, P).
\]
It is noteworthy that for each uniquely generated firmware, corresponding to the same commitment, the proof \( P \) will be completely different.

\subsubsection{Version Iterations stage} This stage (summarized in S4 in Figure~\ref{fig:UpdatePhase}) represents the practical implementation of IMUP, balancing efficiency under customization constraints. Typically, when a user sends a customization request to the server, this process is initiated. During this process, a sequence of \( \mathbf{H} \) will be generated, identical to that in the Verification chain. This implies that updated firmware under the same Verification chain can be substituted and reused. Since users can autonomously load and discard functional modules from an update package based on their needs, the server can flexibly select the number of functionalities to include in a single firmware, depending on the number of modules. This increases the likelihood of reusing generated firmware. For instance, if User\(_1\) requests modules a, b, c, and User\(_2\) requests a, c, d, and there exists an update firmware package containing both of these subsets, then the firmware package can be reused. This approach effectively \textit{reduces the server's load} under high request conditions.

\begin{algorithm}
\caption{Crypto Module Generation}
\label{alg:keygen}
\begin{algorithmic}[1]
\Require Number of Modules $N$, Proof of Work Difficulty $D_\text{PoW}$
\Ensure $\mathcal Crypto Module$
\State $\mathcal Crypto Module$ $\gets \emptyset$

\For{$i = 1$ to $N$}
    \State $PStr, PParam \gets \text{Padding}(32)$
    \State $h \gets C_{\text{hash}}(PStr, PParam)$
    \State $\text{SolutionNonce}, B_\text{hash} \gets \text{PoW}(T, h, D_\text{PoW})$
    \State $\mathbf{CryptoModule}$ $\gets m || r || \text{SolutionNonce} || h$
    \State $\mathcal Crypto Module$.update($\mathbf{CryptoModule}$)
\EndFor
\State \Return $\mathcal Crypto Module$
\end{algorithmic}
\end{algorithm}

Step \textcircled{\scriptsize 6} loads the firmware of the same version as the Verification chain into the server's memory. This step \textcircled{\scriptsize 7} is similar to the Security Version Update stage, the collision parameter \( r \) of the previous block will be replaced in the current firmware to maintain the integrity of the entire firmware. \textcircled{\scriptsize 8} According to the customization request, the required \textit{FModule} will be replaced in the firmware. \textcircled{\scriptsize 9} This process will repeat until each module satisfies the legal requirements of the Chameleon Hash function. Specifically, each unique firmware will generate a unique commitment, which will be stored along with the firmware.

\subsection{Verification Phase}

The firmware verification phase is performed on the IoT device to validate all incoming firmware chains. When the device receives a new firmware chain, the verification chain will be verified. Upon successful verification, the device will select the appropriate \textit{FModule} for updating. This phase consists of steps \textcircled{\scriptsize 10} to \textcircled{\scriptsize 14}, (see in Figure~\ref{fig:UpdatePhase})

\textcircled{\scriptsize 10} The initial device setup must be performed in a trusted environment because this step initializes the verification chain without verification (e.g., when the firmware is first produced at the factory). The device's update interface continuously waits for incoming firmware updates. Verification is divided into two modes: Functional and Security, with each update firmware entering only one branch. 

\textcircled{\scriptsize 11} \textcircled{\scriptsize 12} \textcircled{\scriptsize 13} represent the Functional verification. First, the device checks whether the sequence and values of \(\mathbf{H}\) in the verification chain of the pending firmware update match. Only if they match exactly will the version verification pass. Next, each block is verified for legality, including whether the Chameleon Digest matches and whether the Proof of Work meets the requirements. The Proof of Work module can effectively increase the attacker's cost, reducing the device's attack value. When all verifications pass, the device accepts the firmware update.

\textcircled{\scriptsize 14} represents the Security verification. For convenience, let the original verification chain be denoted as \(\mathcal{V}_{\text{pre}} = (\mathbf{H}_{\text{pre}}, C_{\text{pre}})\), and the updated firmware as \(\mathcal{V}_{\text{Upd}} = (\mathbf{H}_{\text{Upd}}, C_{\text{Upd}})\). If the firmware passes the verification, it satisfies:
\[
\textsf{C}_{\text{hash}}(\mathbf{H}_{\text{pre}}, C_{\text{pre}}) = \textsf{C}_{\text{hash}}(\mathbf{H}_{\text{Upd}}, C_{\text{Upd}})
\]
When the verification passes and each block is completely valid, the device will update the verification chain and accept the updated firmware.

\subsection{Practical Security Mapping}

\begin{table}[htbp]
\centering
\caption{Mapping of Vulnerabilities to IMUP Mitigation Phases}
\resizebox{\linewidth}{!}{
\begin{tabular}{
>{\arraybackslash}c
>{\arraybackslash}c
>{\arraybackslash}p{4.2cm}
>{\arraybackslash}c
}
\toprule
\textbf{Vulnerability} & \textbf{Level} & \textbf{Description} & \textbf{IMUP Mitigation Step(s)} \\
\midrule
CVE-1 & L2 & Config script injection & \textcircled{\scriptsize 2}, \textcircled{\scriptsize 8} \\
CVE-2, 3 & L2 & Rollback marker bypass & \textcircled{\scriptsize 4}, \textcircled{\scriptsize 5} \\
CVE-4 & L3 & Extra module injection & \textcircled{\scriptsize 1}, \textcircled{\scriptsize 3}, \textcircled{\scriptsize 9} \\
CVE-5 & L3 & Stack overflow via header tampering & \textcircled{\scriptsize 8}, \textcircled{\scriptsize 9} \\
\bottomrule
\end{tabular}
}
\label{table:vulnerability-mapping}
\end{table}

To demonstrate how IMUP addresses real-world vulnerabilities, we map the representative flaws identified in Section~\ref{Vulnerabilities} to their corresponding mitigation phases in our design. Table~\ref{table:vulnerability-mapping} summarizes this mapping, highlighting IMUP’s capability to enforce integrity and security under both L2 and L3 customization scenarios.

The steps in Table~\ref{table:vulnerability-mapping} mitigate the listed vulnerabilities by binding all updates to a verifiable integrity structure maintained across customization and version changes. For L2 vulnerabilities such as configuration script injection (CVE-1), steps \textcircled{\scriptsize 2} and \textcircled{\scriptsize 8} ensure that any inserted script must be incorporated into a valid chameleon hash commitment; unauthorized modifications invalidate the commitment and are rejected. For rollback-related flaws (CVE-2 and CVE-3), steps \textcircled{\scriptsize 4} and \textcircled{\scriptsize 5} restrict version changes to those anchored in manufacturer-approved checkpoints, preventing attackers from reverting to unverified or vulnerable firmware. For L3 issues such as extra module injection (CVE-4) and header tampering (CVE-5), steps \textcircled{\scriptsize 1}, \textcircled{\scriptsize 3}, and \textcircled{\scriptsize 9} enforce strict digest ordering and full-chain verification, so any unauthorized module addition or header manipulation leads to immediate failure. Together, these measures ensure that firmware remains integrity-assured under both functional customization and iterative updates.

\section{Security Analysis} 

IMUP resists image forgery and rollback under the adversary model (\ref{ThreatModel}); proofs are provided in Appendix \ref{Appendix:Security Analysis}.

\section{Experiment}

Our experiments on IMUP aim to answer the following Evaluation Questions (EQs):

\begin{enumerate}
    \item \textbf{EQ1:} Can firmware updates remain secure under the constraints of customization rights?
    \item \textbf{EQ2:} Can IoT devices efficiently implement functionalities under the constraints of customization rights?
    \item \textbf{EQ3:} Can servers operate at low cost under the constraints of customization rights?
\end{enumerate}

For \textbf{EQ1} of assessing whether the IMUP scheme can ensure the security of firmware updates when customization rights are allocated, we perform security testing focused on the PoW component. By simulating attacker efforts to forge firmware updates at various security levels, we measure the computational time and resources required to compromise the system. This experiment aims to determine if the increased attack cost introduced by the PoW mechanism effectively deters potential attackers, thereby guaranteeing the security of updates (Section \ref{Q1}). 

For \textbf{EQ2}, we evaluate the efficiency and adaptability of the IMUP scheme on IoT devices with different hardware capabilities. Functional tests are conducted by applying the IMUP scheme to repair various types of vulnerabilities on devices ranging from basic to high-performance. We measure metrics such as verification time and system call overhead to assess whether the IMUP scheme enables IoT devices to efficiently implement required functionalities without imposing significant computational burdens, thus verifying its operational efficiency under customization constraints (Section \ref{Q2}).

For \textbf{EQ3} of examining whether servers can operate stably and cost-effectively when managing firmware updates with customization rights, we design scalability and customization experiments. We simulate high-throughput interactions between the server and users, assessing performance metrics like total processing time, hit rate (the rate of firmware to satisfy different requests), number of generated firmware, and storage consumption under large-scale concurrent requests. By comparing the IMUP scheme with OpenWrt's update strategy, including its package manager, we aim to verify its ability to maintain server operations cheaply. We chose this strategy because it is renowned for its customization capabilities, making it an ideal benchmark for assessing our scheme under extensive customization demands (Section \ref{Q3}).

\subsection{Hardware and Software}

We evaluate IMUP using a testbed with a server, multiple IoT devices, and a unified software setup. Details are provided in Appendix~\ref{Appendix:Hardware}.

\subsection{Attack Overhead}\label{Q1}

\begin{table*}[h!]
    \caption{Attacker Cost Trends with Varying \( D_{\text{PoW}}\) under a 1024-bit Key Size}
    \centering
    \resizebox{1.0\linewidth}{!}{
    \begin{tabular}{c>{\centering\arraybackslash}p{0.1\linewidth}>{\centering\arraybackslash}p{0.1\linewidth}>{\centering\arraybackslash}p{0.1\linewidth}>{\centering\arraybackslash}p{0.1\linewidth}>{\centering\arraybackslash}p{0.1\linewidth}>{\centering\arraybackslash}p{0.1\linewidth}>{\centering\arraybackslash}p{0.1\linewidth}}
    \toprule
    \multirow{2}{*}{} & {\begin{tabular}[c]{@{}c@{}}\textbf{\( D_{\text{c}} \)}\end{tabular}} & 
    \textbf{\( IoT_{\text{vc}} \)}  & \textbf{\( K_\text{99.0\%} \)} & \textbf{\( K_\text{98.5\%} \)} & \textbf{\( K_\text{98.0\%} \)} & \textbf{\( K_\text{97.5\%} \)} & \textbf{\( K_\text{97.0\%} \)} \\ \midrule
    \textbf{\( D_{\text{PoW}}=0 \)} & $<$1ms & 6.37 sec  & 2.05 ms & 65.54 ms & 2.10 sec & 1.12 min & 35.79 min \\ 
    \textbf{\( D_{\text{PoW}}=5 \)} & 0.84 sec & 6.47 sec & 1.73 sec & 55.31 sec & 29.50 min & 15.73 hrs & 8.39 days  \\ 
    \textbf{\( D_{\text{PoW}}=6 \)} & 9.31 sec & 6.49 sec & 19.07 sec & 10.17 min & 5.42 hrs & 2.89 days & 92.56 days  \\ 
    \textbf{\( D_{\text{PoW}}=7 \)} & 2.8 min & 6.61 sec  & 5.73 min & 3.06 hrs & 1.63 days & 52.20 days & 4.57 years \\ 
    \textbf{\( D_{\text{PoW}}=8 \)} & 9.03 min & 6.73 sec  & 18.49 min & 9.86 hrs & 5.26 days & 168.33 days & 14.75 years\\ 
    \bottomrule
    \end{tabular}
    }
    \label{tab:AttackerForgery}
\end{table*}

\begin{table*}[h!]
    \caption{Attacker Cost Trends with Varying Key Sizes under \( D_{\text{PoW}}=7 \)}
    \centering
    \resizebox{1.0\linewidth}{!}{
    \begin{tabular}{c>{\centering\arraybackslash}p{0.1\linewidth}>{\centering\arraybackslash}p{0.1\linewidth}>{\centering\arraybackslash}p{0.1\linewidth}>{\centering\arraybackslash}p{0.1\linewidth}>{\centering\arraybackslash}p{0.1\linewidth}>{\centering\arraybackslash}p{0.1\linewidth}>{\centering\arraybackslash}p{0.1\linewidth}}
    \toprule
    \multirow{2}{*}{} & {\begin{tabular}[c]{@{}c@{}}\textbf{\( D_{\text{c}} \)}\end{tabular}} & 
    \textbf{\( IoT_{\text{vc}} \)} &  \textbf{\( K_\text{99.0\%} \)} & \textbf{\( K_\text{98.5\%} \)} & \textbf{\( K_\text{98.0\%} \)} & \textbf{\( K_\text{97.5\%} \)} & \textbf{\( K_\text{97.0\%} \)} \\ \midrule
    1024-bits & 9.03 min & 6.61 sec & 5.73 min & 3.06 hrs & 1.63 days & 52.20 days & 4.57 years\\ 
    2048-bits & 9.94 min & 7.94 sec & 1.63 days & 4.57 years & \(>\)20 years & \(>\)20 years & \(>\)20 years\\ 
    3072-bits & 10.35 min & 8.17 sec & 4.57 years & \(>\)20 years & \(>\)20 years & \(>\)20 years & \(>\)20 years\\ 
    \bottomrule
    \end{tabular}
    }
    \label{tab:AttackerForgeryKeySize}
\end{table*}

We adopt a non-cooperative game theory model~\cite{von1947theory,Başar2021} to formally analyze the attack overhead of attackers. The participants in the model include the \textbf{Attacker} and the \textbf{Manufacturer} (Defender). The Attacker derives their benefits from forging firmware, while the Defender, as the legitimate firmware publisher, is responsible for maintaining the system's integrity and security. The Defender's strategy involves adjusting the Proof-of-Work difficulty parameter \( D_{\text{PoW}} \) and the key size to maximize system security. Conversely, the Attacker's strategy involves brute-force enumeration to search for possible solutions within the key space.

Within this framework, we design experiments to measure the cost \( A_{\text{C}} \) required for an Attacker to successfully carry out an attack under different \( D_{\text{PoW}} \) and key size settings. As both \( D_{\text{PoW}} \) and the key size increase, the Attacker's cost grows exponentially, thereby reducing the profitability of forging attacks. Additionally, the IMUP mechanism shifts the computational burden to resource-rich server-side environments, ensuring that the verification cost on IoT devices (\( IoT_{\text{vc}} \)) does not increase significantly, while the defender's cost (\( D_{\text{c}} \)) remains manageable. Generally, the computational overhead between the server and the attacker is asymmetry by unique IMUP design when generating and forging firmware, respectively.

\subsubsection{Experimental Setup}
We follow the partial randomness leakage assumption from~\cite{liu2020security}, attacker can recover a certain percentage of the encryption key, denoted as the forged key \( K_p \), where \( p \) represents the percentage of the key exposed. The remaining bytes must be recovered through brute-force enumeration. The attacker employs the encryption function \( \textsf{Encrypt}_{K}(m) \) to attempt to generate legitimate firmware that can pass the device's verification process. Since legitimate firmware must satisfy the Proof-of-Work difficulty parameter \( D_{\text{PoW}} \), each forgery trial increases the computational cost of the attack. Furthermore, considering that computational capabilities limit the Proof-of-Work's computational power, we assume that the attacker possesses a computational power that is 1000 times that of the defender.

In our experiments, we test the impact of different \( D_{\text{PoW}} \) strategies on the attacking cost with a key size of 1024 bits. Additionally, we evaluate the impact of varying key sizes on the Attacker's attack cost under a fixed \( D_{\text{PoW}} = 8 \) strategy. Furthermore, we assess the performance of each strategy on a 400MHz IoT device to show that security parameters of key size and PoW have almost no overhead increase to it, ensuring the practicality of the IMUP design.

\subsubsection{Results and Analysis}
Table~\ref{tab:AttackerForgery} presents the attack cost with varying Proof-of-Work difficulty parameters (\( D_{\text{PoW}} \)) under a fixed key size of 1024 bits. \( D_{\text{PoW}} \) refers to the difficulty coefficient of Proof-of-Work, where a higher value signifies greater computational effort required to complete the PoW task.

\noindent{\bf Results without PoW.} When \( D_{\text{PoW}} = 0 \), the attack cost to forge firmware updates is extremely low. In the case of \( K_{\text{97.5\%}} \), it only takes 35.79 minutes to successfully forge the firmware. This indicates that without the PoW module, an Attacker can quickly generate forged firmware with minimal computational effort when a portion of the key is exposed, posing significant security risks.

\noindent{\bf Results with PoW.} As the PoW difficulty level \( D_{\text{PoW}} \) increases, the attacker cost escalates substantially. For example, at \( D_{\text{PoW}} = 5 \), the forgery time ranges from 1.73 seconds (\( K_{\text{99.5\%}} \)) to 8.39 days (\( K_{\text{97.5\%}} \)). When \( D_{\text{PoW}} \) increases to 8, the cost grows exponentially, ranging from 18.49 minutes to 14.75 years, even though the attacker possesses 1000 times the computational power of the defender. However, the defense cost remains low at only 9 minutes to the server. This asymmetric computational overhead design effectively prevents the forgery attack. Even when a vast majority of key bits are exposed, the system still provides sufficient security time for the defender to respond and mitigate losses.

Table~\ref{tab:AttackerForgeryKeySize} explores the attack cost \( A_{\text{C}} \) with varying key sizes under a fixed \( D_{\text{PoW}} = 8 \). The results demonstrate that as the key size increases from 1024 bits to 3072 bits, the attack cost rises significantly. Specifically, for a 1024-bit key, the cost reaches 168.33 days, whereas for a 2048-bit key, it exceeds 20 years. For a 3072-bit key, the attack cost remains above 20 years across all measured parameters. This substantial increase underscores the robustness of the system against attacks, even in scenarios involving catastrophic key leakage.

Another prominent finding is that the computational overhead on IoT devices (\( IoT_{\text{vc}} \)) remains largely unaffected across different \( D_{\text{PoW}} \) and key sizes. For instance, in Table~\ref{tab:AttackerForgery}, \( IoT_{\text{vc}} \) shows minimal fluctuations, staying within approximately 6.37 to 6.73 seconds regardless of \( D_{\text{PoW}} \). Similarly, in Table~\ref{tab:AttackerForgeryKeySize}, \( IoT_{\text{vc}} \) remains consistent, further validating that the IMUP mechanism effectively offloads the computational burden to server-side. This ensures that IoT devices maintain their operational efficiency without compromising security.

\subsection{Overhead on IoT Device}\label{Q2}

\begin{figure*}[htbp]
    \centering
    \includegraphics[width=\textwidth, trim=0cm 0.35cm 0cm 0cm, clip]{./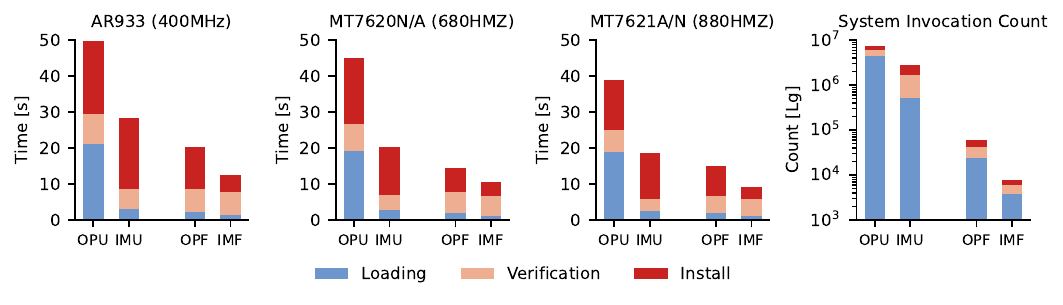}
    \caption{The time consumption and system call distributions of OPKG and IMUP for functionality updates (OPU, IMU) and security fixes (OPF, IMF) across different CPU frequencies.}
    \label{fig:IoTCost}
\end{figure*}

To evaluate the efficiency and applicability of our IMUP scheme across a wide range of IoT devices with diverse hardware capabilities, we measure both the execution time of firmware patching and the number of system calls invoked during the update process. As a baseline, we adopt the widely used OPKG package manager due to its extensive support for customization \cite{openwrt_opkg,openmoko_opkg}. 

OPKG is a general-purpose solution for installing and managing software components, enabling users to customize device functionalities through an open repository system. Despite its strengths, it is \emph{not} designed to address IMUP’s core challenges—particularly the seamless integration of user-friendly customization with vendor-side security. Under the OPKG model, each device independently fetches updates from external repositories, increasing computational overhead and complicating maintenance. 

IMUP simplifies customization for both users and vendors by modularizing firmware and ensuring integrity through cryptographic mechanisms. This reduces risks and complexities in decentralized updates, easing the user burden and enabling vendors to maintain secure, consistent firmware at scale.

\subsubsection{Experimental Setup}

The parameters of the IMUP scheme are set to $L=7$, $D_{\mathrm{PoW}}=5$, and the cryptographic keys employed have a 2048-bit key size. We plan to conduct two categories of experiments: \textbf{Functionality Update} and \textbf{Vulnerability Fix}. 
The specific implementation details are provided in Appendix~\ref{Appendix:Details}.

\smallskip
\subsubsection{Results and Analysis}

Figure~\ref{fig:IoTCost} presents a comparative analysis of update operations across three IoT devices with different computation resources (CPU frequencies ranging from AR933 (400,MHz) and MT7620N/A (680,MHz) to MT7621A/N (880,MHz). 
We evaluate two update paradigms: OPKG-based (OPU for functional updates, OPF for security fixes) and IMUP-based (IMU for functional updates, IMF for security fixes). The metrics considered include execution time during loading, verification, and installation stages, as well as the associated system call counts.

\noindent{\bf Reduced Overhead on Low-Resource Devices.} On the AR933 IoT device (400,MHz), the baseline OPU approach costs 21.24 s for loading and 20.16 ms for installation, whereas our IMU shortens these durations to 3.16 s and 19.62 s, respectively—an 85\% improvement in the loading stage. This acceleration is primarily achieved by offloading complex dependency resolution and remote component fetching to the server side, eliminating the need for frequent on-device lookups. As a result, the loading phase’s system calls decreased from over 9 million under OPU to about 31,849 under our IMU. A similar trend holds for security patching: OPF’s loading time decreases from 2.34 s to 1.51 s under our IMF. These results demonstrate IMUP's superior efficacy in alleviating computational and networking overhead, particularly for IoT devices with limited processing capabilities.

\noindent{\bf Consistent Gains with Increasing Computation Resource.} As computation resources increase e.g., CPU frequency increases, both OPKG and IMUP approaches improve their performance. However, IMUP consistently maintains a substantial advantage. For instance, at 880,MHz (MT7621A/N), IMUP loading times for functional updates (IMU) and security patches (IMF) still outperform OPU and OPF by wide margins. IMU reduces loading from 18.98 s (OPU) to 2.51 s, and IMF cuts it from 1.89 s (OPF) to 1.01 s. This trend demonstrates that IMUP’s efficiency is feasible at the low end of hardware capabilities and scales effectively as IoT device performance improves.

\begin{table*}[h!]
    \caption{Summary of Server Performance under Large-Scale Requests (detailed results are in Appendix)}
    \centering
    \resizebox{1.0\linewidth}{!}{
    \begin{tabular}{>{\centering\arraybackslash}p{0.07\linewidth}>{\centering\arraybackslash}p{0.07\linewidth}c>{\centering\arraybackslash}p{0.1\linewidth}ccc}
        \toprule
        \textbf{Modules Number} & \textbf{Requests Times} & \textbf{Scheme Type} & 
        \textbf{Total Time (s)} & \textbf{Hit Rate (\%)} & 
        \textbf{Number of Firmware} & \textbf{Storage (GB)} \\ 
        \midrule
        \multirow{3}{*}{200} & \multirow{3}{*}{10,000} & Monolithic Rebuild & 282,313.12 & 1.77 & 9,823 & 356.85 \\ 
        & & Package Manager & 332.13 & N/A & N/A & N/A \\ 
        & & IMUP \(L=7\) & 230.88 & 0.64 & 3,608 & 7.11 \\ 
        \cmidrule(lr){2-7}
        & \multirow{3}{*}{20,000} & Monolithic Rebuild & 576,066.80 & 0.3 & 19,940 & 724.38 \\ 
        & & Package Manager & 667.70 & N/A & N/A & N/A \\ 
        & & IMUP \(L=7\) & 318.65 & 75.39 & 4,923 & 9.69 \\ 
        \cmidrule(lr){2-7}
        & \multirow{3}{*}{30,000} & Monolithic Rebuild & 926,320.50 & 0.07 & 29,978 & 1,089.04 \\ 
        & & Package Manager & 1,014.21 & N/A & N/A & N/A \\ 
        & & IMUP \(L=7\) & 357.55 & 81.88 & 5,437 & 10.71 \\ 
        \midrule
        \multirow{1}{*}{2,000} & 30,000 & Monolithic Rebuild & 816,001.46 & 0.06 & 29,981 & 1,089.15 \\ 
        & & Package Manager & 1,175.53 & N/A & N/A & N/A \\ 
        & & IMUP \(L=7\) & 1,080.89 & 45.71 & 16,286 & 32.06 \\ 
        \midrule
        \multirow{1}{*}{4,000} & 30,000 & Monolithic Rebuild & 712,770.42 & 0.07 & 29,978 & 1,089.04 \\ 
        & & Package Manager & 1,137.92 & N/A & N/A & N/A \\ 
        & & IMUP \(L=7\) & 389.70 & 45.13 & 16,460 & 32.40 \\ 
        \bottomrule
    \end{tabular}
    }
    \label{tab:ServerPerformance}
\end{table*}

\subsection{Server Stability and Cost Efficiency}\label{Q3}

To examine whether servers can operate at low cost under the constraints of customization rights (\textbf{RQ3}), we conduct large-scale performance evaluations focusing on four quantifiable metrics:
\begin{itemize}
    \item \textbf{Total Time}: The cumulative time required by the server to retrieve existing firmware images and generate new ones.
    \item \textbf{Hit rate}: The percentage of requests fulfilled using pre-existing images in the server cache, reflecting the server’s ability to leverage previously generated firmware. Formally,
    \[
    \text{Hit rate} = \frac{\text{Number of Reused}}{\text{Total Number of Requests}} \times 100\%
    \]
    \item \textbf{Number of Generated Firmware}: The total count of distinct firmware images produced and stored by the server, indicating how effectively redundant image builds are avoided.
    \item \textbf{Storage Consumption}: The total storage space occupied by all retained firmware images, serving as an indicator of ongoing operational costs.
\end{itemize}

We evaluate these metrics under varying workload intensities by issuing 10,000, 20,000, and 30,000 concurrent image build requests. Following the modularity documented by OpenWrt as of August 2024~\cite{openwrt2024}, we vary the number of modular firmware packages available on the server (200, 2,000, and 4,000) to represent different levels of functional diversity. A request generator, based on module popularity and allowing duplicate requests, is employed to reflect the dynamics observed in real-world scenarios. For each incoming request, the server either retrieves an existing cached image or compiles a new one, depending on the availability of a suitable pre-generated image.

In addition, we explore how adjustments to chain lengths (\(L\)), security parameters (\(D_{\text{PoW}}\)), and cryptographic key sizes (e.g., 2048-bit keys) influence on these metrics (Appendix \ref{Appendix:Configuration}). Through these comprehensive evaluations, we demonstrate that it is possible for servers to maintain minimum processing times, high hit rates, and low storage consumption when handling large-scale, customization-intensive request workloads.

\subsubsection{Experimental Setup}

We consider two baseline update strategies for comparison, aligning with the analysis methodology presented in Ebbers \textit{et al.}'s study~\cite{ebbers2022large}.

\begin{enumerate}
    \item \textbf{Monolithic Rebuild Strategy}: Upon receiving a user request, the server searches for an existing firmware image. If no pre-compiled image is available, the server recompiles the entire kernel. This strategy incurs high time and memory costs due to complete recompilation.
        
    \item \textbf{Package Manager Strategy}: IoT devices send customized requests to a server, which then returns the corresponding packages. Upon receiving each package, the IoT devices verify it and resolve relevant dependencies, recursively requesting any missing dependencies from the server until the update process is complete. We refer to this entire sequence of interactions as a single “request.” To more accurately assess server performance, the processing time on the IoT devices is excluded from our measurements. Since the image of this strategy is not reusable, we do not consider metrics such as hit rate and storage size.
\end{enumerate}

\subsubsection{Throughput Analysis}

Table~\ref{tab:ServerPerformance} presents the server performance evaluation under large-scale requests, comparing the Monolithic Rebuild, Package Manager, and IMUP (\( L=7 \)) schemes across varying numbers of modules and request volumes.

The results indicate that the IMUP scheme consistently exhibits lower server-side costs compared to the Monolithic Rebuild and Package Manager strategies. This efficiency is attributable to the rational allocation of customization rights between users and manufacturers, which effectively increases the firmware hit rate. 
By allowing manufacturers to generate and store a wider range of firmware variants, the likelihood that a requested firmware already exists on the server increases, reducing the need for time-consuming on-the-fly generation.

Although the hit rate decreases slightly as the number of modules increases, it remains above 40\% even at higher module counts. This demonstrates the scalability of the IMUP scheme in handling large-scale firmware requests while maintaining efficient server performance.
In practical applications, more popular components and modules are requested more frequently, leading to higher hit rates than those observed in the experimental setup. The popularity of certain modules means that they are likely to be cached on the server, further enhancing the efficiency of the IMUP scheme in real-world scenarios.
For example, when the number of modules is 200 and the number of requests is 20,000, the IMUP scheme achieves a hit rate of 75.39\%, significantly higher than the Monolithic Rebuild strategy's hit rate of only 0.30\%. This high hit rate contributes to the IMUP scheme's lower total processing time and reduced storage requirements.

Overall, the IMUP scheme's ability to maintain a high firmware hit rate, even as the number of modules increases, results in lower server processing times and storage consumption. This makes it a cost-effective solution for firmware distribution across deployments requiring different combinations of modules and varying update granularities.

\smallskip
\section{Conclusion}
We present the first comprehensive study of firmware customization under strict security and efficiency demands. In addressing our research questions—ensuring robust security while supporting modular user updates—we propose IMUP. By combining a Chameleon Hash function with proof-of-work, IMUP shifts cryptographic overhead to powerful servers, dramatically raising the cost of firmware forgery. Its modular design meets user customization needs and simplifies version control. Evaluations on diverse IoT devices confirm IMUP’s effectiveness in balancing security and performance for next-generation firmware updates.

\newpage

\bibliographystyle{IEEEtran}
\bibliography{references}

\begin{thebibliography}{10}
\providecommand{\url}[1]{#1}
\csname url@samestyle\endcsname
\providecommand{\newblock}{\relax}
\providecommand{\bibinfo}[2]{#2}
\providecommand{\BIBentrySTDinterwordspacing}{\spaceskip=0pt\relax}
\providecommand{\BIBentryALTinterwordstretchfactor}{4}
\providecommand{\BIBentryALTinterwordspacing}{\spaceskip=\fontdimen2\font plus
\BIBentryALTinterwordstretchfactor\fontdimen3\font minus \fontdimen4\font\relax}
\providecommand{\BIBforeignlanguage}[2]{{%
\expandafter\ifx\csname l@#1\endcsname\relax
\typeout{** WARNING: IEEEtran.bst: No hyphenation pattern has been}%
\typeout{** loaded for the language `#1'. Using the pattern for}%
\typeout{** the default language instead.}%
\else
\language=\csname l@#1\endcsname
\fi
#2}}
\providecommand{\BIBdecl}{\relax}
\BIBdecl

\bibitem{Mavromatis2022ReliableIF}
\BIBentryALTinterwordspacing
I.~Mavromatis, A.~Stanoev, A.~Portelli, C.~Lockie, M.~Ammann, Y.~Jin, and M.~Sooriyabandara, ``Reliable iot firmware updates: A large-scale mesh network performance investigation,'' \emph{2022 IEEE Wireless Communications and Networking Conference (WCNC)}, pp. 108--113, 2022. [Online]. Available: \url{https://api.semanticscholar.org/CorpusID:247318437}
\BIBentrySTDinterwordspacing

\bibitem{10.1145/3590140.3629121}
\BIBentryALTinterwordspacing
A.~Mahurkar, X.~Wang, H.~Zhang, and B.~Ravindran, ``Dynacut: A framework for dynamic and adaptive program customization,'' in \emph{Proceedings of the 24th International Middleware Conference}, ser. Middleware '23.\hskip 1em plus 0.5em minus 0.4em\relax New York, NY, USA: Association for Computing Machinery, 2023, p. 275–287. [Online]. Available: \url{https://doi.org/10.1145/3590140.3629121}
\BIBentrySTDinterwordspacing

\bibitem{shadowhammer_asus_2019}
{Kaspersky Lab}, ``{ShadowHammer}: Operation targeting {ASUS} users through compromised software updates,'' \emph{Kaspersky Securelist}, 2019, [Online]. Available: \url{https://securelist.com/operation-shadowhammer/89992/} (Accessed: Oct. 25, 2024).

\bibitem{tipranks2024}
{TipRanks}, ``{CrowdStrike Update Leads to Windows Crashes for MSFT Users},'' 2024, [Online]. Available: \url{https://www.tipranks.com/news/crowdstrike-update-leads-to-windows-crashes-for-msft-users}.

\bibitem{wu2024your}
Y.~Wu, J.~Wang, Y.~Wang, S.~Zhai, Z.~Li, Y.~He, K.~Sun, Q.~Li, and N.~Zhang, ``Your firmware has arrived: A study of firmware update vulnerabilities,'' in \emph{33rd USENIX Security Symposium (USENIX Security 24)}, 2024, pp. 5627--5644.

\bibitem{owasp_iot_top_10_2018}
{OWASP}, ``{Internet of Things (IoT) Top 10 2018},'' \url{https://wiki.owasp.org/index.php/OWASP\_Internet\_of\_Things\_Project\#tab\\=IoT\_Top\_10}, [Online]. Accessed: Oct. 2024.

\bibitem{openwrtAbout}
{OpenWrt Project}, ``About openwrt,'' \url{https://openwrt.org/about}, 2024, accessed: 4-Apr-2025.

\bibitem{ebbers2022large}
F.~Ebbers, ``A large-scale analysis of iot firmware version distribution in the wild,'' \emph{IEEE Transactions on Software Engineering}, vol.~49, no.~2, pp. 816--830, 2022.

\bibitem{10.1145/3384419.3430471}
\BIBentryALTinterwordspacing
K.~Arakadakis and A.~Fragkiadakis, ``Incremental firmware update using an efficient differencing algorithm: poster abstract,'' in \emph{Proceedings of the 18th Conference on Embedded Networked Sensor Systems}, ser. SenSys '20.\hskip 1em plus 0.5em minus 0.4em\relax New York, NY, USA: Association for Computing Machinery, 2020, pp. 691--692. [Online]. Available: \url{https://doi.org/10.1145/3384419.3430471}
\BIBentrySTDinterwordspacing

\bibitem{liu2023light}
S.~Liu, M.~Lv, W.~Zhang, X.~Jiang, C.~Gu, T.~Yang, W.~Yi, and N.~Guan, ``Light flash write for efficient firmware update on energy-harvesting iot devices,'' in \emph{2023 Design, Automation \& Test in Europe Conference \& Exhibition (DATE)}.\hskip 1em plus 0.5em minus 0.4em\relax IEEE, 2023, pp. 1--6.

\bibitem{10422815}
D.~Xie, J.~Yang, B.~Wu, W.~Bian, F.~Chen, and T.~Wang, ``An effectively applicable to resource constrained devices and semi-trusted servers authenticated key agreement scheme,'' \emph{IEEE Transactions on Information Forensics and Security}, vol.~19, pp. 3451--3464, 2024.

\bibitem{10583906}
H.~Yu, H.~Zhang, Z.~Yang, and S.~Yu, ``Edasvic: Enabling efficient and dynamic storage verification for clouds of industrial internet platforms,'' \emph{IEEE Transactions on Information Forensics and Security}, vol.~19, pp. 6896--6909, 2024.

\bibitem{hu2019autonomous}
J.-W. Hu, L.-Y. Yeh, S.-W. Liao, and C.-S. Yang, ``Autonomous and malware-proof blockchain-based firmware update platform with efficient batch verification for internet of things devices,'' \emph{Computers \& Security}, vol.~86, pp. 238--252, 2019.

\bibitem{park2019two}
H.~Park, H.~Kim, S.-T. Kim, P.~Mah, and C.~Lim, ``Two-phase dissemination scheme for coap-based firmware-over-the-air update of wireless sensor networks: demo abstract,'' in \emph{Proceedings of the 17th Conference on Embedded Networked Sensor Systems}, 2019, pp. 404--405.

\bibitem{10153598}
F.~Wang, J.~Cui, Q.~Zhang, D.~He, C.~Gu, and H.~Zhong, ``Blockchain-based lightweight message authentication for edge-assisted cross-domain industrial internet of things,'' \emph{IEEE Transactions on Dependable and Secure Computing}, vol.~21, no.~4, pp. 1587--1604, July 2024.

\bibitem{8884933}
A.~Langiu, C.~A. Boano, M.~Schuß, and K.~Römer, ``Upkit: An open-source, portable, and lightweight update framework for constrained iot devices,'' in \emph{2019 IEEE 39th International Conference on Distributed Computing Systems (ICDCS)}, 2019, pp. 2101--2112.

\bibitem{su2019secucode}
Y.~Su, Y.~Gao, M.~Chesser, O.~Kavehei, A.~Sample, and D.~C. Ranasinghe, ``Secucode: Intrinsic puf entangled secure wireless code dissemination for computational rfid devices,'' \emph{IEEE Transactions on Dependable and Secure Computing}, vol.~18, no.~4, pp. 1699--1717, 2019.

\bibitem{10704696}
Y.~Yang, J.~Li, N.~Luo, Z.~Yan, Y.~Zhang, and K.~Zeng, ``Batchauth: A physical layer batch authentication scheme for multiple backscatter devices,'' \emph{IEEE Transactions on Information Forensics and Security}, vol.~19, pp. 9452--9466, 2024.

\bibitem{9355854}
F.~Mehdipour, ``A review of iot security challenges and solutions,'' in \emph{2020 8th International Japan-Africa Conference on Electronics, Communications, and Computations (JAC-ECC)}, 2020, pp. 1--6.

\bibitem{298092}
\BIBentryALTinterwordspacing
F.~Linker and D.~Basin, ``{SOAP}: A social authentication protocol,'' in \emph{33rd USENIX Security Symposium (USENIX Security 24)}.\hskip 1em plus 0.5em minus 0.4em\relax Philadelphia, PA: USENIX Association, Aug. 2024, pp. 3223--3240. [Online]. Available: \url{https://www.usenix.org/conference/usenixsecurity24/presentation/linker}
\BIBentrySTDinterwordspacing

\bibitem{298238}
\BIBentryALTinterwordspacing
M.~Geihs and H.~Montgomery, ``{LaKey}: Efficient {Lattice-Based} distributed {PRFs} enable scalable distributed key management,'' in \emph{33rd USENIX Security Symposium (USENIX Security 24)}.\hskip 1em plus 0.5em minus 0.4em\relax Philadelphia, PA: USENIX Association, Aug. 2024, pp. 4319--4335. [Online]. Available: \url{https://www.usenix.org/conference/usenixsecurity24/presentation/geihs}
\BIBentrySTDinterwordspacing

\bibitem{casaca2024}
J.~A. Casaca and L.~P. Miguel, ``The influence of personalization on consumer satisfaction: Trends and challenges,'' in \emph{Data-Driven Marketing for Strategic Success}.\hskip 1em plus 0.5em minus 0.4em\relax IGI Global, 2024, pp. 256--292.

\bibitem{7881335}
T.~Vallée, K.~Sedki, S.~Despres, M.-C. Jaulant, K.~Tabia, and A.~Ugon, ``On personalization in iot,'' in \emph{2016 International Conference on Computational Science and Computational Intelligence (CSCI)}, 2016, pp. 186--191.

\bibitem{9326414}
A.~Abusukhon, ``Toward achieving a balance between the user satisfaction and the power conservation in the internet of things,'' \emph{IEEE Internet of Things Journal}, vol.~8, no.~14, pp. 10\,998--11\,015, 2021.

\bibitem{nvd_api}
``{NVD Developers - Vulnerability API},'' \url{https://nvd.nist.gov/developers/vulnerabilities}, accessed: 2025-05-30.

\bibitem{krawczyk1998chameleon}
H.~Krawczyk and T.~Rabin, ``Chameleon hashing and signatures,'' \emph{Cryptology ePrint Archive}, 1998.

\bibitem{chen2004chameleon}
X.~Chen, F.~Zhang, and K.~Kim, ``Chameleon hashing without key exposure,'' in \emph{International Conference on Information Security}.\hskip 1em plus 0.5em minus 0.4em\relax Springer, 2004, pp. 87--98.

\bibitem{dwork1992pricing}
C.~Dwork and M.~Naor, ``Pricing via processing or combatting junk mail,'' in \emph{Annual international cryptology conference}.\hskip 1em plus 0.5em minus 0.4em\relax Springer, 1992, pp. 139--147.

\bibitem{rfc9019}
{Internet Engineering Task Force (IETF)}, ``{RFC 9019: A Firmware Update Architecture for Internet of Things},'' 2021, [Online]. Available: \url{https://www.rfc-editor.org/rfc/rfc9019.html}.

\bibitem{9727516}
C.~Sun, R.~Xing, Y.~Wu, G.~Zhou, F.~Zheng, and D.~Hu, ``Design of over-the-air firmware update and management for iot device with cloud-based restful web services,'' in \emph{2021 China Automation Congress (CAC)}, Oct 2021, pp. 5081--5085.

\bibitem{von1947theory}
J.~Von~Neumann and O.~Morgenstern, ``Theory of games and economic behavior, 2nd rev,'' 1947.

\bibitem{Başar2021}
\BIBentryALTinterwordspacing
T.~Ba{\c{s}}ar, \emph{Game Theory: A General Introduction and a Historical Overview}.\hskip 1em plus 0.5em minus 0.4em\relax Cham: Springer International Publishing, 2021, pp. 881--886. [Online]. Available: \url{https://doi.org/10.1007/978-3-030-44184-5\_26}
\BIBentrySTDinterwordspacing

\bibitem{liu2020security}
Y.~Liu, Y.~Zhou, S.~Sun, T.~Wang, R.~Zhang, and J.~Ming, ``On the security of lattice-based fiat-shamir signatures in the presence of randomness leakage,'' \emph{IEEE Transactions on Information Forensics and Security}, vol.~16, pp. 1868--1879, 2020.

\bibitem{openwrt_opkg}
{OpenWrt Project}, ``Opkg package manager,'' \url{https://openwrt.org/docs/guide-user/additional-software/opkg}, accessed: 2024-12-19.

\bibitem{openmoko_opkg}
{Openmoko Project}, ``Opkg,'' \url{http://wiki.openmoko.org/wiki/Opkg}, accessed via Wayback Machine: 2024-12-19.

\bibitem{openwrt2024}
{OpenWrt Wiki}, ``{OpenWrt Packages},'' 2024, [Online]. Available: \url{https://openwrt.org/packages/start}.

\bibitem{cybersecuritynews_chinese_hijackers}
{CyberSecurityNews}, ``Chinese hackers hijacked routers,'' [Online]. Available: \url{https://cybersecuritynews.com/chinese-hackers-hijacked-routers/}, accessed: Dec. 24, 2024.

\bibitem{secrss_indian_power_grid}
{Secrss}, ``Indian power grid intrusion,'' [Online]. Available: \url{https://www.secrss.com/articles/49412}, accessed: Dec. 24, 2024.

\bibitem{ammar2024bridging}
M.~Ammar, A.~Abdelraoof, and S.~Vlasceanu, ``On bridging the gap between control flow integrity and attestation schemes,'' in \emph{33rd USENIX Security Symposium (USENIX Security 24)}, 2024, pp. 6633--6650.

\bibitem{lorych2024hardware}
D.~Lorych and C.~Plappert, ``Hardware trust anchor authentication for updatable iot devices,'' in \emph{Proceedings of the 19th International Conference on Availability, Reliability and Security}, 2024, pp. 1--11.

\bibitem{10014854}
C.~Liptak, S.~Mal-Sarkar, and S.~A. Kumar, ``Power analysis side channel attacks and countermeasures for the internet of things,'' in \emph{2022 IEEE Physical Assurance and Inspection of Electronics (PAINE)}, 2022, pp. 1--7.

\bibitem{wrotniak2019provable}
S.~Wr{\'o}tniak, H.~Leibowitz, E.~Syta, and A.~Herzberg, ``Provable security for pki schemes,'' \emph{Cryptology ePrint Archive}, 2019.

\bibitem{li2024secure}
S.~Li, C.~Zhang, and D.~Lin, ``Secure multiparty computation with lazy sharing,'' in \emph{Proceedings of the 2024 on ACM SIGSAC Conference on Computer and Communications Security}, 2024, pp. 795--809.

\bibitem{pasquini2024breach}
D.~Pasquini, D.~Francati, G.~Ateniese, and E.~M. Kornaropoulos, ``Breach extraction attacks: Exposing and addressing the leakage in second generation compromised credential checking services,'' in \emph{2024 IEEE Symposium on Security and Privacy (SP)}.\hskip 1em plus 0.5em minus 0.4em\relax IEEE, 2024, pp. 1405--1423.

\bibitem{liu2024enforcing}
Y.~Liu, H.~Chen, and Z.~Yang, ``Enforcing end-to-end security for remote conference applications,'' in \emph{2024 IEEE Symposium on Security and Privacy (SP)}.\hskip 1em plus 0.5em minus 0.4em\relax IEEE, 2024, pp. 2630--2647.

\bibitem{299754}
R.~L. Schr{\"o}der, S.~Gast, and Q.~Guo, ``Divide and surrender: Exploiting variable division instruction timing in {HQC} key recovery attacks,'' in \emph{33rd USENIX Security Symposium (USENIX Security 24)}, Philadelphia, PA, Aug. 2024, pp. 6669--6686.

\bibitem{horvath2024sok}
P.~Horv{\'a}th, D.~Lauret, Z.~Liu, and L.~Batina, ``Sok: Neural network extraction through physical side channels,'' in \emph{33rd USENIX Security Symp}, 2024, pp. 3403--3422.

\bibitem{294617}
Q.~Jiang and C.~Wang, ``{Sync+Sync}: A covert channel built on fsync with storage,'' in \emph{33rd USENIX Security Symposium (USENIX Security 24)}, Philadelphia, PA, Aug. 2024, pp. 3349--3366.

\bibitem{en16083465}
\BIBentryALTinterwordspacing
M.~Mansour, A.~Gamal, A.~I. Ahmed, L.~A. Said, A.~Elbaz, N.~Herencsar, and A.~Soltan, ``Internet of things: A comprehensive overview on protocols, architectures, technologies, simulation tools, and future directions,'' \emph{Energies}, vol.~16, no.~8, 2023. [Online]. Available: \url{https://www.mdpi.com/1996-1073/16/8/3465}
\BIBentrySTDinterwordspacing

\bibitem{openwrt_targets_19_07_0}
{OpenWrt Project}, ``Openwrt 19.07.0 target downloads,'' \url{https://downloads.openwrt.org/releases/19.07.0}, [Online]. Accessed: Dec. 3, 2024.

\end{thebibliography}

\appendix

\subsection{Threat-Model Details} \label{Appendix:Threat-Model}

Through our investigation, we note that users primarily obtain update components through two methods: Offline and Over-The-Air (OTA) Update, as shown in Figure \ref{Threat Models}. These update methods form the core of our unified threat model, outlining potential adversarial actions and vulnerabilities within the IoT update ecosystem.

\noindent\textbf{Offline Update.} This mode is commonly used for customization. Users typically search for suitable customized update configurations and firmware on community forums or official websites. After downloading the relevant firmware, they manually load it onto their IoT devices. This process relies on the user's technical expertise and security awareness. In general, offline updates do not occur automatically, even when the current firmware version may be subject to security vulnerabilities.

\noindent\textbf{OTA Update.} It is the predominant update solution offered by commercial firmware providers. When manufacturers sync firmware to servers, IoT devices download and update the designated firmware at appropriate intervals. This fully automated update process reduces user involvement and enhances user experience. However, it also means that users and devices cannot customize functionalities based on the current operating environment, significantly diminishing both user experience and device performance. 

\noindent\textbf{Adversary Model.}  
We assume that adversaries possess advanced technical skills—including proficiency in reverse-engineering firmware updates and binary code—which enables them to extract complete firmware images, log files, and internal details (e.g., memory layouts, update verification logic, and embedded symmetric keys) from targeted IoT devices. Motivated by financial gain~\cite{cybersecuritynews_chinese_hijackers,secrss_indian_power_grid}, political objectives, or competitive sabotage, these adversaries employ techniques such as man-in-the-middle (MITM) attacks and exploitation of weak signature verification to intercept and tamper with firmware updates.

Although adversaries may have substantial computational resources, their capabilities are constrained by factors such as network latency, legal limitations, and restricted physical access. We assume that the vendor’s signing infrastructure is secure and that IoT devices implement a secure boot process. Aligned with~\cite{ammar2024bridging,lorych2024hardware}, physical and side-channel attacks are out of scope. These attack vectors require expensive laboratory equipment, prolonged physical access to the device, and fine-grained manipulation capabilities, which are not practical for scalable or remote attacks. As demonstrated by Wu {\it et al.}~\cite{wu2024your}, update vulnerabilities can be remotely exploited without requiring device teardown or advanced hardware probing. Therefore, we focus on adversaries capable of launching remote, scalable threats—such as firmware tampering and protocol-level manipulation—which are both more prevalent and feasible in real-world IoT deployment environments. Furthermore, physical protection mechanisms, such as tamper-resistant packaging or secure enclaves, are often employed as complementary defenses at the hardware level~\cite{10014854}, while our study aims to strengthen the integrity and resilience of the firmware update pipeline at the software and system level.

\begin{table*}[h!]
    \caption{Performance Evaluation of Monolithic Rebuild, Package Manager, and IMUP Schemes 1024-bit Key Size)}
    \centering
    \resizebox{1.0\linewidth}{!}{
    \begin{tabular}{lcc>{\centering\arraybackslash}p{0.15\linewidth}>{\centering\arraybackslash}p{0.15\linewidth}}
        \toprule
        \textbf{Metric} & \textbf{Monolithic Rebuild} & \textbf{Package Manager} & 
        \begin{tabular}[c]{@{}c@{}}\textbf{IMUP}\\ (\( L=7 \), \( D_{\text{PoW}}=5 \))\end{tabular} & 
        \begin{tabular}[c]{@{}c@{}}\textbf{IMUP}\\ (\( L=7 \), \( D_{\text{PoW}}=6 \))\end{tabular} \\ 
        \midrule
        Preparation Time (s) & N/A & N/A & 8.83 & 195.47 \\ 
        First Processing Time (s) & 157.56 & 2.01 & 8.89 & 195.52 \\ 
        Subsequent Processing Time (s) & N/A & 1.96 & 0.06 & 0.05 \\ 
        Memory Usage (MB) & 342.37 & 48.60 & 20.24 & 30.44 \\
        Avg. Time (5 Images, s) & 157.56 & 1.97 & \textbf{1.81} & 39.14
 \\     Avg. Time (110 Images, s) & N/A & 1.96 & 0.14 & \textbf{1.83}
 \\
        \bottomrule
    \end{tabular}
    }
    \label{tab:PerformanceEvaluation}
\end{table*}

\begin{table*}[h!]
    \caption{IMUP Configurations (\( D_{\text{PoW}}=5 \)) Under High-Volume Requests for \( L=6 \), \( L=7 \), and \( L=8 \) (1024-bit Key Size)}
    \centering
    \resizebox{1.0\linewidth}{!}{
    \begin{tabular}{>{\raggedright\arraybackslash}p{0.25\linewidth}>{\centering\arraybackslash}p{0.05\linewidth}>{\centering\arraybackslash}p{0.05\linewidth}>{\centering\arraybackslash}p{0.05\linewidth}>{\centering\arraybackslash}p{0.05\linewidth}>{\centering\arraybackslash}p{0.05\linewidth}>{\centering\arraybackslash}p{0.05\linewidth}>{\centering\arraybackslash}p{0.05\linewidth}>{\centering\arraybackslash}p{0.05\linewidth}>{\centering\arraybackslash}p{0.05\linewidth}}
        \toprule
        \textbf{Requests} & \multicolumn{3}{c}{\textbf{10,000}} & \multicolumn{3}{c}{\textbf{20,000}} & \multicolumn{3}{c}{\textbf{30,000}} \\ 
        \cmidrule(lr){2-4} \cmidrule(lr){5-7} \cmidrule(lr){8-10}
        \textbf{Scheme} & \textbf{\( L=8 \)} & \textbf{\( L=7 \)} & \textbf{\( L=6 \)} & \textbf{\( L=8 \)} & \textbf{\( L=7 \)} & \textbf{\( L=6 \)} & \textbf{\( L=8 \)} & \textbf{\( L=7 \)} & \textbf{\( L=6 \)} \\ 
        \midrule
        Search Time (s) & 3.94 & 4.13 & 4.71 & 8.45 & 9.67 & 11.99 & 12.29 & 14.62 & 18.96 \\ 
        Average Processing Time (ms) & 0.07 & 0.06 & 0.05 & 0.07 & 0.06 & 0.05 & 0.07 & 0.06 & 0.05 \\ 
        Average Search Time (ms) &0.39 & 0.41 & 0.47 & 0.42 & 0.48 & 0.60 & 0.41 & 0.49 & 0.63 \\ 
        Hit Rate (\%) & \textbf{67.83} & 63.92 & 59.90 & \textbf{79.73} & 75.39 & 70.45 & \textbf{85.14} & 81.88 & 77.32 \\ 
        Number of Generated Firmware & \textbf{3217} & 3,608 & 4,010 & \textbf{4055} & 4,923 & 5,911 & \textbf{4460} & 5,437 & 6,805 \\ 
        Storage (GB) & 7.24 & 7.11 & \textbf{6.77} & \textbf{9.13} & 9.69 & 9.98 & \textbf{10.04} & 10.71 & 11.49 \\ 
        \bottomrule
    \end{tabular}
    }
    \label{tab:ComparisonOfIMUP}
\end{table*}

\subsection{Security Analysis} \label{Appendix:Security Analysis}

We now analyze the security of IMUP based on the scheme described in Section \ref{Design}, and provide relevant security proofs using the attacker's capabilities and objectives as described in the threat model of Section \ref{ThreatModel}. Aligned with~\cite{wrotniak2019provable,li2024secure,pasquini2024breach,liu2024enforcing}, we do not take Side Channel Attacks (SCAs) into account for the following reasons. First, attackers cannot directly access the victim's device, making traditional SCAs~\cite{299754,horvath2024sok,294617} that rely on physical phenomena (such as timing delays, power consumption, electromagnetic leaks, etc.) infeasible in this scenario. Second, the core trapdoor information is used only on the server side, and we assume the server environment is secure and resistant to remote SCAs. Even if attackers obtain a copy of the device, they can only extract information through SCAs. However, within our defined threat model and assumptions, all information except for the trapdoor data stored on the server is public. Therefore, SCAs are excluded from our consideration. 

\begin{theorem}
If the selected chameleon hash function possesses properties like forgery resistance and collision resistance, then the scheme is a secure firmware defense mechanism.
\end{theorem}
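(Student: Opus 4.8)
The plan is to turn the informal phrase ``secure firmware defense mechanism'' into a game-based definition and then give a reduction: any PPT adversary that wins the IMUP game can be converted into an adversary against either the collision resistance or the forgery resistance of the underlying chameleon hash. First I would fix the experiment in line with the threat model of Section~\ref{ThreatModel}: the adversary $\mathcal{A}$ is a remote, MITM-capable, binary-tampering algorithm with no trapdoor and no physical/side-channel access; it is given all public material a device can hold plus an oracle that returns legitimately issued artifacts (the $\mathbf{CModule}$s, the $\mathbf{FModule}$s, the verification chains $\mathcal{V}=(\mathbf{H},C)$, and crucially the collision rewrites $r\mapsto r'$ produced in stages S2--S5). $\mathcal{A}$ wins if it produces an image $\tilde{F}$ that passes $\textsf{Verify}(\mathbf{H},C)$ against some device's stored chain $\mathcal{V}_{\text{pre}}$ yet is not one of the legitimately issued images consistent with $\mathcal{V}_{\text{pre}}$; this single condition simultaneously captures module injection, module reordering, dependency omission, and rollback.

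Next I would case-split on which branch $\tilde{F}$ triggers in the Verification Phase. In the \emph{Functional} branch, $\textsf{Verify}$ demands that $\mathbf{H}_{\tilde F}$ equal $\mathbf{H}$ exactly (sequence and values) and that every block's chameleon digest and PoW check out; hence if any block carries a module differing from the legitimate one while its digest is unchanged, the pair $(\text{module},r)$ and $(\text{module}',r')$ is a chameleon-hash collision, and a reduction $\mathcal{B}_1$ that runs $\mathcal{A}$ and outputs this pair breaks collision resistance. If instead all blocks coincide with legitimate ones, then $\tilde F$ is itself legitimate, contradicting the winning condition; reordering is handled identically because a permuted $\mathbf{H}$ fails the exact-match test unless it forces a digest collision. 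In the \emph{Security} branch, $\textsf{Verify}$ requires $\textsf{C}_{\text{hash}}(\mathbf{H}_{\text{pre}},C_{\text{pre}})=\textsf{C}_{\text{hash}}(\mathbf{H}_{\text{Upd}},C_{\text{Upd}})$ against the device's stored $(\mathbf{H}_{\text{pre}},C_{\text{pre}})$, so a winning $\tilde F$ yields a fresh pre-image pair colliding with the stored one; reduction $\mathcal{B}_2$ outputs it, contradicting forgery resistance. Rollback is then a sub-case of the Security branch: after an S3 update the device overwrites $\mathcal{V}$ with the freshly committed $(\mathbf{H}_{\text{Upd}},C_{\text{Upd}})$, so an old image's $(\mathbf{H}_{\text{old}},C_{\text{old}})$ no longer satisfies the collision equation and replaying it again reduces to the collision-finding event above. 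Collecting the cases gives
\[
\mathrm{Adv}^{\textsf{IMUP}}_{\mathcal{A}}\;\le\;\mathrm{Adv}^{\text{cr}}_{\mathcal{B}_1}\;+\;\mathrm{Adv}^{\text{forge}}_{\mathcal{B}_2}\;+\;\mathrm{negl}(\lambda),
\]
where the negligible term absorbs collisions in the ordinary hash invoked inside the PoW task $T$.

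The step I expect to be the main obstacle is the oracle simulation, because IMUP deliberately publishes many chameleon-hash collisions (every $r\mapsto r'$ rewrite), so plain collision resistance is not by itself enough: I must invoke the key-exposure-freeness of Chen \emph{et al.}'s identity-based construction~\cite{chen2004chameleon} and argue that $\mathcal{B}_1,\mathcal{B}_2$ can answer every query $\mathcal{A}$ makes using only the collisions already released by the honest server, without ever touching the trapdoor---i.e., that the reduction is ``collision-oracle-respecting'' and embeds its target instance into a position $\mathcal{A}$ has not yet seen a collision for. Making this embedding precise, and proving the dichotomy that every module injection/reordering either surfaces as a digest mismatch (immediately rejected) or as a genuine new collision (reduction succeeds), is where the real work lies; the PoW layer contributes only the concrete cost asymmetry already quantified in Section~\ref{Q1} and plays no role in the asymptotic statement.
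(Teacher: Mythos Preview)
Your proposal follows essentially the same route as the paper: split the adversary's goal into two cases and argue that success in either forces a chameleon-hash collision without the trapdoor, contradicting (weak) collision resistance. The paper's own proof is far terser than your plan---it labels the two cases \emph{Tampering} (your Functional branch) and \emph{Forgery} (your Security branch), writes down the offending collision equation in each, invokes weak collision resistance, and stops; there is no explicit game, no advantage inequality, no treatment of rollback as a separate sub-case, and no discussion of the oracle-simulation / key-exposure issue you identify as the main obstacle. So your outline subsumes the paper's argument and is strictly more careful; the concern you raise about the reduction having to answer $\mathcal{A}$'s queries using only the server-published collisions (and hence needing key-exposure-freeness rather than plain collision resistance) is legitimate, but the paper simply does not engage with it.
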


\begin{proof}
The IoT device accepts the firmware only if the verification algorithm outputs bit 1. The attacker attempts to tamper with the firmware to inject malicious data into the IoT device.

\textbf{Tampering}: Suppose there exists a polynomial-time Tamper algorithm that takes a tampered firmware chain $FakeFirmWare$ as input and outputs a firmware that passes the verification algorithm, denoted as:
\begin{center}
    FirmWare $\gets$ \text{Tamper}(\textbf{CModule}, FakeFirmWare)
\end{center}

At this stage, the attacker successfully forges the firmware. The Tamper algorithm must recalculate the hash value of the tampered parts and assign new random parameters $R$ to maintain chain integrity. Specifically, the Tamper algorithm must find a pair of chameleon hash values in polynomial time that satisfy the following equation:

\[
\textsf{C}_{\text{hash}}(\textbf{CModule}, r) = \textsf{C}_{\text{hash}}( \text{FakeFirmWare},R)
\]

Because the underlying chameleon hash function possesses weak collision resistance, no algorithm can find a pair of chameleon hash collisions in polynomial time. Thus, the scheme is tamper-resistant.
    
\textbf{Forgery}: Suppose there exists a polynomial-time Forgery algorithm that takes a commitment value as input and outputs firmware that passes the verification algorithm but cannot be used normally, denoted as:
{\setlength{\abovedisplayskip}{4pt}
\setlength{\belowdisplayskip}{4pt}
\begin{center}
$\text{FirmWare} \gets \text{Forgery}(Random, C)$
\end{center}
}
At this stage, the attacker successfully forges the firmware. This forged firmware damages the IoT device firmware, disrupting the device's normal operation. The Forgery algorithm must generate an output that passes the HVerify algorithm based on the commitment value, denoted as: 
{\setlength{\abovedisplayskip}{4pt}
\setlength{\belowdisplayskip}{4pt}
    \begin{center}
    $b = 1 \gets \text{HVerify}(C)$
    \end{center}
}
Specifically, the Forgery algorithm must find a pair of chameleon hash values in polynomial time that satisfy the following equation:

\[
\textsf{C}_{\text{hash}}(\text{FirmWare}, r) = \textsf{C}_{\text{hash}}( \text{FakeFirmWare},C)
\]

Since the underlying chameleon hash function has weak collision resistance, there is no algorithm within polynomial time that can find a pair of chameleon hash collisions without knowing the trapdoor. Therefore, the scheme is forgery-resistant.
\end{proof}



\subsection{IMUP Generation Cost} \label{Appendix:Generation}

For the IMUP scheme, we introduce several unique metrics to characterize its distinctive features:

\begin{itemize}
    \item \textbf{Preparation Time (s)}: The time required to generate the cryptographic modules that meet the scheme's requirements.
    \item \textbf{First Processing Time (s)}: The time from the start of the preparation phase to the complete generation of the initial functional firmware image.
    \item \textbf{Subsequent Processing Time (s)}: The time needed to generate additional firmware images after the first image has been successfully created.
    \item \textbf{Search Time (s)}: The average time spent searching for usable firmware among the already generated images.
\end{itemize}

\begin{table*}[h!]
    \caption{Comparison of Server Performance under Large-Scale Requests for Different Module Numbers and Schemes}
    \centering
    \resizebox{1.0\linewidth}{!}{
    \begin{tabular}{>{\centering\arraybackslash}p{0.07\linewidth}>{\centering\arraybackslash}p{0.07\linewidth}c>{\centering\arraybackslash}p{0.1\linewidth}ccc}
        \toprule
        \textbf{Modules Number} & \textbf{Requests Times} & \textbf{Scheme Type} & 
        \textbf{Total Processing Time (s)} & \textbf{Hit Rate (\%)} & 
        \textbf{Number of Firmware} & \textbf{Storage (GB)} \\ 
        \midrule
        \multirow{3}{*}{200} & \multirow{3}{*}{10,000} & Monolithic Rebuild & 282,313.12 & 1.77 & 9,823 & 356.85 \\ 
        & & Package Manager & 332.13 & N/A & N/A & N/A \\ 
        & & IMUP \(L=7\) & 230.88 & 0.64 & 3,608 & 7.11 \\ 
        \cmidrule(lr){2-7}
        & \multirow{3}{*}{20,000} & Monolithic Rebuild & 576,066.80 & 0.3 & 19,940 & 724.38 \\ 
        & & Package Manager & 667.70 & N/A & N/A & N/A \\ 
        & &  IMUP \(L=7\) & 318.65 & 75.39 & 4,923 & 9.69 \\ 
        \cmidrule(lr){2-7}
        & \multirow{3}{*}{30,000} & Monolithic Rebuild & 926,320.50 & 0.07 & 29,978 & 1,089.04 \\ 
        & & Package Manager & 1,014.21 & N/A & N/A & N/A \\ 
        & &  IMUP \(L=7\) & 357.55 & 81.88 & 5,437 & 10.71 \\ 
        \midrule
        \multirow{3}{*}{2,000} & \multirow{3}{*}{10,000} & Monolithic Rebuild & 264,132.78 & 0.44 & 9,956 & 361.68 \\ 
        & & Package Manager & 353.05 & N/A & N/A & N/A \\ 
        & &  IMUP \(L=7\) & 355.97 & 43.98 & 5,602 & 11.03 \\ 
        \cmidrule(lr){2-7}
        & \multirow{3}{*}{20,000} & Monolithic Rebuild & 635,164.40 & 0.1 & 19,980 & 725.84 \\ 
        & & Package Manager & 803.16 & N/A & N/A & N/A \\ 
        & &  IMUP \(L=7\) & 706.57 & 45.12 & 10,977 & 21.61 \\ 
        \cmidrule(lr){2-7}
        & \multirow{3}{*}{30,000} & Monolithic Rebuild & 816,001.46 & 0.06 & 29,981 & 1,089.15 \\ 
        & & Package Manager & 1,175.53 & N/A & N/A & N/A \\ 
        & &  IMUP \(L=7\) & 1,080.89 & 45.71 & 16,286 & 32.06 \\ 
        \midrule
        \multirow{3}{*}{4,000} & \multirow{3}{*}{10,000} & Monolithic Rebuild & 260,604.29 & 1.77 & 9,823 & 356.85 \\ 
        & & Package Manager & 389.80 & N/A & N/A & N/A \\ 
        & &  IMUP \(L=7\) & 368.64 & 42.08 & 5,792 & 11.41 \\ 
        \cmidrule(lr){2-7}
        & \multirow{3}{*}{20,000} & Monolithic Rebuild & 633,892.80 & 0.3 & 19,940 & 724.38 \\ 
        & & Package Manager & 792.38 & N/A & N/A & N/A \\ 
        & &  IMUP \(L=7\) & 714.49 & 44.66 & 11,068 & 21.79 \\ 
        \cmidrule(lr){2-7}
        & \multirow{3}{*}{30,000} & Monolithic Rebuild & 712,770.42 & 0.07 & 29,978 & 1,089.04 \\ 
        & & Package Manager & 1,137.92 & N/A & N/A & N/A \\ 
        & &  IMUP \(L=7\) & 389.70 & 45.13 & 16,460 & 32.40 \\ 
        \bottomrule
    \end{tabular}
    }
    \label{tab:AppendixServerPerformance}
\end{table*}

Table~\ref{tab:PerformanceEvaluation} presents the efficiency evaluation of firmware generation for the Monolithic Rebuild, Package Manager and IMUP schemes. In this section, the IMUP scheme uses a 1024-bit key size and is evaluated under two different security parameters (\( D_{\text{PoW}} = 5 \) and \( D_{\text{PoW}} = 6 \)).

Results show that the IMUP scheme significantly outperforms the Monolithic Rebuild strategy and is also superior to Package Manager in terms of processing time and memory usage. 
Although the initial generation time of IMUP is slower than that of the Package Manager strategy due to the generation of cryptographic modules and the PoW computation, the IMUP configurations surpass the Package Manager in resource consumption after generating as few as 5 firmware images. Specifically, the IMUP scheme with \( D_{\text{PoW}}=5 \) achieves a lower average processing time per image compared to the Package Manager when generating 5 images, and both IMUP configurations significantly outperform the Package Manager after generating 110 images. This indicates that the IMUP scheme becomes more efficient and resource-effective than the Package Manager strategy as the number of firmware images scales up, highlighting its suitability for large-scale firmwares.


\subsection{IMUP Configuration Analysis} \label{Appendix:Configuration}

Table \ref{tab:ComparisonOfIMUP} benchmarks IMUP with chain lengths \(L\!\in\!\{6,7,8\}\) under a fixed PoW difficulty (\(D_{\text{PoW}}{=}5\)) and a 1024-bit key.  
Besides storage, hit rate, and the number of generated images, we track **server-side “search time’’—the time to locate a matching image in the cluster.**

\noindent\textbf{Findings.} (1) A longer chain raises the hit rate, so fewer new images are built and search time drops. (2) Publishers can tune \(L\) to match their module count and request volume, balancing reuse against storage. Overall, high hit rates and short search times show that IMUP scales well for large customisation bursts while keeping operational cost low; its tunable \(L\) and security level let vendors optimise for their own environments.

\subsection{Hardware and Software} \label{Appendix:Hardware}

The server was equipped with an AMD Ryzen 7 4800HS processor (2.90 GHz) with Radeon Graphics, 16 GB of RAM, and a 512 GB SSD, running Ubuntu 20.04 LTS. This server acted as the firmware publisher, responsible for generating and distributing firmware updates.

The IoT devices selected for testing represented a range of hardware capabilities commonly found in the field\cite{en16083465}. We utilized three router models: (1) Atheros AR9330 with a MIPS 24Kc CPU at 400 MHz and 32 MB RAM (basic performance), (2) MT7620N/A with a MIPS24KEc CPU at 580 MHz and 128 MB RAM (standard performance), and (3) MT7621A/N featuring a MIPS1004Kc dual-core CPU at 880 MHz and 256 MB RAM (high performance). These devices operated as firmware recipients, allowing us to assess the IMUP scheme's efficiency and scalability across different hardware configurations.

For the software configuration, all IoT devices ran the OpenWrt 19.07 operating system\cite{openwrt_targets_19_07_0}. The server environment included the OpenWrt build system and utilized tools such as IMUPackage for modular firmware packaging and custom scripts for generating and verifying firmware packages. The Opkg package manager was employed for baseline comparisons. Devices were connected via a secure, isolated local area network to ensure controlled testing conditions.

\subsection{Functionality Update \& Vulnerability Fix Details} \label{Appendix:Details}

We selected five key functionalities as test subjects: 
\begin{enumerate}[label=(\Alph*)]
    \item Network load balancing
    \item Traffic control
    \item VPN configuration
    \item File sharing services
    \item Ad blocking and content filtering
\end{enumerate}
These collectively encompass critical aspects of security, efficiency, and user experience. 
Users send update requests to the server to initiate the update process, upon which the server generates and returns firmware update packages containing the selected functionalities. 
Once the packages are received, the devices execute the corresponding functionality updates.

\subsubsection{Vulnerability Fix}
For the vulnerability fix experiments, four known security vulnerabilities were selected: CVE-2019-19945, CVE-2021-28961, CVE-2023-24181 and CVE-2023-24182.
The server generates update packages containing the respective vulnerability patches and transmits them to the target devices. 
After receiving these patches, the devices apply them, completing the vulnerability remediation process.

\subsection{Server Stability and Cost Efficiency} \label{Appendix:ServerStability}

Table \ref{tab:AppendixServerPerformance} provides a comprehensive comparison of server performance under large-scale requests for different module numbers and schemes. The table illustrates the impact of varying request volumes (10,000, 20,000, and 30,000) and module numbers (200, 2,000, and 4,000) on server processing time, firmware hit rates, number of generated firmware, and storage consumption.

For Modules Number = 200, the data covers all three request levels, showcasing how the IMUP scheme significantly improves server efficiency, reducing processing time and storage requirements compared to Monolithic Rebuild and Package Manager strategies.

For Modules Number = 2,000 and 4,000, the table focuses on the largest request volume (30,000), demonstrating the scalability of IMUP in handling more complex and diverse customization scenarios while maintaining lower computational costs and higher hit rates than alternative schemes.

The detailed results highlight the superiority of the IMUP framework in balancing efficiency and flexibility for large-scale firmware update processes.


\end{document}